\def\bra #1{\langle #1\vert}
\def\ket #1{\vert #1\rangle}
\def\tr{{\rm Tr}}
\newcommand{\be}{\begin{equation}}
\newcommand{\ee}{\end{equation}}
\newcommand{\ba}{\begin{array}}
\newcommand{\ea}{\end{array}}
\newcommand{\bea}{\begin{eqnarray}}
\newcommand{\eea}{\end{eqnarray}}
\newcommand{\calL}{{\cal L }}
\newcommand{\calR}{{\cal R }}
\newcommand{\calC}{{\cal C }}
\newcommand{\calM}{{\cal M }}
\newcommand{\EE}{\mathbb{E}}
\newcommand{\RR}{\mathbb{R}}
\newcommand{\la}{\langle}
\newcommand{\ra}{\rangle}
\newcommand{\trace}[1]{{\mathrm{Tr}{#1}}}
\newcommand*{\ExpE}{\mathbb{E}}
\newcommand{\Lslater}[1]{{\lambda_{\mathrm{Slater}}({#1})}}
\newcommand{\Lgauss}[1]{{\lambda_{\mathrm{Gauss}}({#1})}}
\newcommand{\Lmax}[1]{{\lambda_{\max}({#1})}}
\newtheorem{lemma}{Lemma}
\newtheorem{fact}{Fact}
\newtheorem{theorem}{Theorem}
\newtheorem{problem}{Problem}
\begin{document}
\title{Approximation algorithms for quantum many-body problems}
\author{Sergey Bravyi$^1$}
\author{David Gosset$^{1,2}$}
\author{Robert K\"{o}nig$^3$}
\author{Kristan Temme$^1$}
\affiliation{$^{1}$IBM T.J. Watson Research Center, Yorktown Heights, NY 10598, USA}

\affiliation{$^{2}$ Department of Combinatorics \& Optimization and Institute for Quantum Computing, University of Waterloo, Waterloo, ON, N2L 3G1, Canada}
\affiliation{$^{3}$Institute for Advanced Study \& Zentrum Mathematik, Technical University of Munich, 85748 Garching, Germany}
\begin{abstract}
We discuss classical algorithms  for approximating the largest eigenvalue of 
quantum spin and fermionic Hamiltonians based on semidefinite programming relaxation methods.
First, we consider traceless $2$-local Hamiltonians $H$ describing a system of $n$ qubits.
We give an efficient algorithm that outputs a separable state whose energy
is at least $\lambda_{\max}/O(\log{n})$, where $\lambda_{\max}$ is the maximum
eigenvalue of $H$.   We also give a simplified proof of a theorem due to Lieb  that 
establishes the existence of a separable state with energy at least $\lambda_{\max}/9$.
Secondly, we consider a system of $n$ fermionic modes and  traceless
Hamiltonians composed of quadratic and quartic fermionic operators.
We give an efficient algorithm that outputs a   fermionic Gaussian state
whose energy is at least $\lambda_{\max}/O(n\log{n})$.
Finally, we show that  Gaussian states can vastly outperform Slater determinant
states commonly used in the Hartree-Fock method.
We give a simple family of Hamiltonians  for which
Gaussian states and Slater determinants approximate
$\lambda_{\max}$ within a fraction $1-O(n^{-1})$ and $O(n^{-1})$
respectively.
\end{abstract}
\maketitle

\section{Introduction}
\label{sec:intro}

Quantum many-body systems with local interactions are central to condensed matter physics and chemistry. Their significance in quantum computer science derives from the fact that computing the minimal or maximal energy configuration of such a system is a quantum analogue of constraint satisfaction \cite{kitaev2002classical, kempe2004complexity}. While the worst-case hardness of such quantum constraint satisfaction problems is well-understood (and largely parallels the classical theory) \cite{kitaev2002classical, kempe2004complexity, cubitt2016complexity, gharibian2015quantum}, the study of quantum approximation problems has been a topic of recent interest, motivated by the prospect of generalizing the classical PCP (probabilistically checkable proofs) theorem \cite{gharibian2012approximation, aharonov2013guest, brandao2013product, eldar2015local, nirkhe2018approximate}. 
Here  we show that  
optimization problems encountered in quantum many-body physics can
be tackled using approximation algorithms based on the semidefinite programming
relaxation method
pioneered by Goemans and Williamson~\cite{goemans1995improved}
and generalized  further in Refs.~\cite{charikar2004maximizing,nemirovski2007sums,so2009improved,so2011moment}.

Our starting point is the classical problem of maximizing a binary quadratic function
\begin{equation}
F(x)=x^T B x +v^T x,\qquad \quad x\in \{\pm 1\}^n,
\label{eq:f}
\end{equation}
defined by a matrix $B\in \mathbb{R}^{n\times n}$ 
and a vector $v\in \mathbb{R}^n$. 
We shall assume that $B$ has zero diagonal so that $F(x)$ has no constant terms.
Computing the maximum 
\[
F_{\max}=\max_{x\in \{\pm 1\}^n} F(x)
\]
exactly is NP-hard; for example, if $B$ is a $\{0,1\}$ matrix and $v=0$ then computing $F_{\max}$ is equivalent to computing the Max-Cut of the simple graph with adjacency matrix $B$. Charikar and Wirth \cite{charikar2004maximizing} considered the approximation problem in which one aims to compute $x\in \{\pm 1\}^n$ such that  the approximation ratio $F(x)/F_{max}$ is as large as possible.  They showed that an efficient classical algorithm based on rounding a semidefinite programming relaxation achieves an approximation ratio of $\Omega(\log^{-1}(n))$.  Conversely, Arora et al. \cite{arora2005non} have established that for some absolute constant $0\leq \gamma\leq 1$  it is quasi-NP hard to obtain a $\Omega(\log^{-\gamma}(n))$ approximation ratio \footnote{In particular, Ref.~\cite{arora2005non} shows that if an efficient algorithm achieves this approximation ratio then there exists an algorithm which solves any decision problem in $NP$ on input size $n$ using runtime $n^{\mathrm{poly}(\log(n))}$. This is believed to be very unlikely.}.  We note that including a linear term in Eq.~\eqref{eq:f} is unnecessary, as there is a simple and efficient reduction to the case $v=0$~\footnote{Given a function Eq.~\eqref{eq:f} we can add an auxiliary variable $y\in \{\pm1\}$ and consider the function of $n+1$ variables $G(x,y)=x^T B x+y b^T x$ which only contains quadratic terms. It is then easily seen that the range of $G$ is equal to the range of $F$.}.
On the other hand, the definition of approximation ratio used here depends crucially on the
assumption that $F(x)$ has no constant terms (in particular, $F_{max}\ge 0$ since
the expected value of $F(x)$ on a random uniform bit string $x$ is zero).

In the present paper we consider a natural quantum analogue of binary quadratic functions
-- traceless Hamiltonians $H$ that describe systems of qubits or fermions
with two-body interactions. We show how to adapt approximation algorithms
developed in the classical case to approximate the maximum (or minimum) 
eigenvalue of $H$. 
We discuss qubit Hamiltonians and approximations by separable states
in Section~\ref{sec:qubit}. Fermionic 
Hamiltonians and approximations based on Slater determinants
and Gaussian states are discussed 
 in Section~\ref{sec:fermion}.

\section{Two-local qubit Hamiltonians}
\label{sec:qubit}

A traceless $2$-local  Hamiltonian is a quantum generalization of the binary quadratic function Eq.~\eqref{eq:f}. Write the Pauli operators acting on the $a$-th qubit as $P_{3a-2}=X_a, P_{3a-1}=Y_a, P_{3a}=Z_a$.
Here $1\le a\le n$. We shall consider traceless $2$-local Hamiltonians acting on a system of $n$ qubits, that is
\begin{align}
H&=H_1+H_2\label{eq:Htraceless} \\
H_1&=\sum_{j=1}^{3n} D_jP_j\ \qquad  H_2 =\sum_{i,j=1}^{3n}C_{i,j}P_i P_j\nonumber
\end{align}
where we assume that $C_{i,j}=0$ if $P_i$ and $P_j$ act on the same qubit. Without loss of generality $C^T=C\in\mathbb{R}^{3n\times 3n}$ is symmetric.  Note that any traceless $2$-local Hamiltonian can be expressed as in Eq.~\eqref{eq:Htraceless}. Moreover, the classical binary quadratic optimization problem described above is obtained as a special case where the Hamiltonian is diagonal in the computational basis.

The maximum energy of $H$ is its largest eigenvalue~\footnote{We note that all our results apply also
to the problem of minimizing the energy of $H$ and approximating the minimum eigenvalue 
$\lambda_{\min}(H)=-\lambda_{\max}(-H)$ which is more relevant 
in many-body physics.
We opted to consider a maximization problem to avoid a proliferation of  minus signs.}
which we denote $\lambda_{\max}(H)$. Computing this quantity exactly or
estimating it  with a small additive error is 
known to be a  QMA-hard problem~\cite{kempe2004complexity}.
It is considered highly unlikely that such problems 
admit an efficient algorithm, either classical or quantum.
Instead, here we seek to compute an estimate $\tilde{\lambda}\leq \lambda_{\max}(H)$ with a good approximation ratio $\tilde{\lambda}/\lambda_{\max}(H)$. What is the best (largest) approximation ratio that can be achieved by a polynomial-time algorithm? The hardness result of Arora et al.~ \cite{arora2005non} imply that we cannot hope to beat $\Omega(\log^{-\gamma}(n))$ for some constant $\gamma>0$. In this section we generalize the algorithm of Ref.~\cite{charikar2004maximizing} while retaining its performance guarantee.
\begin{theorem}
There is an efficient classical algorithm which, given $H$ of the form in Eq. \eqref{eq:Htraceless}, outputs a product state $|\phi\rangle=|\phi_1\rangle\otimes \ldots \otimes |\phi_n\rangle$ such that with probability at least $2/3$ 
\[
\langle \phi|H|\phi\rangle\geq \frac{\Lmax{H}}{O(\log{n})}.
\]
Moreover, each single-qubit state $\phi_i$ is an eigenstate of one of the Pauli operators $X,Y$ or $Z$.
\label{thm:alg}
\end{theorem}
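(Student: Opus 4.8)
The plan is to generalize the Charikar--Wirth rounding scheme \cite{charikar2004maximizing} to the block structure imposed by the qubits. First I would set up a semidefinite relaxation whose value upper bounds $\Lmax{H}$. Associate to each (qubit, Pauli) pair an index $j\in\{1,\dots,3n\}$ and introduce unit vectors $u_0,u_1,\dots,u_{3n}$ with Gram matrix $M_{ij}=\la u_i,u_j\ra$, $M_{ii}=1$, $M\succeq 0$, and objective
\[
\Phi(M)=\sum_{j=1}^{3n} D_j M_{0j}+\sum_{i,j=1}^{3n} C_{ij} M_{ij}.
\]
To see $\sdp:=\max_M \Phi(M)\ge \Lmax{H}$, let $\rho$ be a maximal eigenstate and set $M_{0j}=\tr(\rho P_j)$, $M_{ij}=\tr(\rho P_iP_j)$ for $i\ne j$ (this is real and symmetric, and vanishes whenever $P_i,P_j$ share a qubit), and $M_{00}=M_{ii}=1$. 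Positivity follows since for any real $w$, $w^T M w=\tr(\rho A^2)\ge0$ with $A=w_0 I+\sum_j w_j P_j$; and because $C_{ij}=0$ on shared qubits, $\Phi(M)=\tr(\rho H_1)+\tr(\rho H_2)=\Lmax{H}$. Solving the SDP is efficient, yielding unit vectors $\{u_j\}$ achieving value $\approx\sdp$. Sending $u_j\mapsto-u_j$ for all $j\ge1$ fixes the quadratic part $q^\ast=\sum_{ij}C_{ij}\la u_i,u_j\ra$ while flipping the linear part $\ell^\ast=\sum_j D_j\la u_0,u_j\ra$, so choosing the better sign I may assume $\ell^\ast\ge0$.

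Next I would collapse the three Pauli directions of each qubit to one, reducing to an ordinary binary quadratic problem. For each qubit $a$ draw an independent uniform label $c_a\in\{0,1,2\}$ and retain only $u_{3a-2+c_a}$; this yields an instance on signs $s_1,\dots,s_n\in\{\pm1\}$ with coefficients $\tilde D_a=D_{3a-2+c_a}$ and $\tilde C_{ab}=C_{3a-2+c_a,\,3b-2+c_b}$ (diagonal-free, since shared-qubit couplings vanish), whose relaxation uses the retained vectors $u_0$ and $\{u_{3a-2+c_a}\}$. Over the random labels each linear term survives with probability $\tfrac13$ and each cross term with probability $\tfrac19$, so the relaxation value of the reduced instance at these vectors has expectation $\tfrac13\ell^\ast+\tfrac19 q^\ast$. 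Using $\ell^\ast\ge0$ one checks $\tfrac13\ell^\ast+\tfrac19 q^\ast=\tfrac19(3\ell^\ast+q^\ast)\ge\tfrac19(\ell^\ast+q^\ast)=\tfrac19\sdp$, so on average the reduced instance retains a constant fraction of the SDP value irrespective of cancellation between its linear and quadratic parts.

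Then I would invoke the Charikar--Wirth rounding on the reduced instance: feeding the retained unit vectors into their truncated Gaussian-projection rounding (project onto a random Gaussian, truncate at threshold of order $\sqrt{\log n}$, and round the resulting multilinear objective to a vertex) produces signs $s_a$ with expected objective at least $c/\log n$ times the value of the plugged-in vectors. Interpreting $s_a$ as the eigenvalue of $P_{3a-2+c_a}$ on qubit $a$ gives a product state of the required form, each factor an eigenstate of $X$, $Y$ or $Z$, whose energy equals the reduced objective. Chaining the two expectations gives $\ExpE[\la\phi|H|\phi\ra]\ge\tfrac{c}{\log n}\cdot\tfrac19\sdp=\Omega(\Lmax{H}/\log n)$. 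Finally, since $\la\phi|H|\phi\ra\le\Lmax{H}$ always and the energy of any candidate is computable exactly in $O(n^2)$ time, I would repeat the randomized construction $O(\log n)$ times and keep the best, boosting the success probability to $2/3$ via a reverse-Markov argument.

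I expect the main obstacle to be importing the Charikar--Wirth guarantee with the correct constant, and in particular verifying that their rounding achieves $\Omega(1/\log n)$ relative to the \emph{value of an arbitrary feasible vector configuration} (here the restricted $u_j$) rather than only the reduced optimum; this requires re-examining their truncated-Gaussian second-moment estimates and the treatment of the linear term through the reference vector $u_0$. The remaining steps --- positivity of $M$, the $\tfrac13,\tfrac19$ bookkeeping, and the probability boosting --- are routine once the normalization $\ell^\ast\ge0$ is in place.
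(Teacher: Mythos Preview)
Your plan is sound and reaches the target, but organizes the argument differently from the paper in two respects. First, the paper eliminates $H_1$ up front via a time-reversal trick (Lemma~\ref{lem:timereversal}, adding an ancilla carrying $Z_{n+1}H_1$), whereas you carry the linear term through the SDP via the reference vector $u_0$ and the observation that $\ell^\ast\ge 0$ holds automatically at the optimum. Second, the paper rounds the full $3n$-variable SDP directly to a \emph{mixed} product state with Bloch components truncated at $1/\sqrt{3}$ (so each single-qubit density matrix is valid), and only afterwards applies the entanglement-breaking channel $\mathcal{E}_{1/3}^{\otimes n}$ to land on stabilizer factors, paying a factor~$9$; you instead perform the random Pauli selection \emph{before} rounding, collapsing to a genuinely classical $n$-variable binary quadratic form on which unmodified Charikar--Wirth rounding applies, paying the same factor~$9$ in expectation over the labels~$c_a$. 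These orderings are dual---your random $c_a$ plays exactly the role of the paper's random measurement basis---so the constants match. Your route makes the reduction to the classical theorem very transparent; the paper's route avoids tracking the signed random quantity $V_c$ and ties directly to the depolarizing-channel picture also used in the proof of Lieb's theorem. One ingredient you will need but have not named is the bound $\sum_{i,j}|C_{ij}|+\sum_j|D_j|\le O(n)\,\lambda_{\max}(H)$ (Harrow--Montanaro, Eq.~\eqref{eq:hm}; cf.\ Eq.~\eqref{eq:1norm}): this is what allows the truncation error $e^{-\Omega(T^2)}\bigl(\sum_{a,b}|\tilde C_{ab}|+\sum_a|\tilde D_a|\bigr)$ to be absorbed after averaging over~$c$, and is exactly where your anticipated ``second-moment'' obstacle actually resolves.
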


The proof (given below) is based on rounding a semidefinite programming relaxation of the optimization over product states, mirroring the classical proof from Ref. ~\cite{charikar2004maximizing}.

Theorem \ref{thm:alg} complements previous work on product state approximations for local Hamiltonian problems and the fundamental limitations of mean-field theory \cite{gharibian2012approximation,brandao2013product,harrow2017extremal}. However, our setting is slightly different and our results are not directly comparable. For example, 
Gharibian and Kempe~\cite{gharibian2012approximation} 
studied approximation ratio with respect to the maximal eigenvalue of a local Hamiltonian which is a sum of positive semidefinite terms, whereas an essential feature of our definition is that the Hamiltonian is traceless.
Brandao and Harrow~\cite{brandao2013product} established upper bounds on the additive error between the energy attainable by a product state and the maximal eigenvalue. The most closely related result is an algorithm due to Harrow and Montanaro \cite{harrow2017extremal} which, given a traceless $2$-local Hamiltonian $H$ of the form in Eq.~\eqref{eq:Htraceless}, outputs a product state $|\phi\rangle$ with energy at least a $\Omega(n^{-1})$ fraction of the $1$-norm of the coefficients appearing in the Hamiltonian, i.e., 
\begin{equation}
\langle \phi|H|\phi\rangle \geq \frac{1}{Kn} \left( \sum_{i,j=1}^{3n} |C_{ij}|+\sum_{i=1}^{3n} |D_i|\right),
\label{eq:hm}
\end{equation}
where $K>0$ is an absolute constant. Here we study a different notion of approximation ratio defined with respect to the maximum eigenvalue $\lambda_{\max}(H)$ rather than the $1$-norm of the coefficients.

While we do not expect a polynomial-time algorithm to significantly beat the approximation ratio achieved by Theorem \ref{thm:alg}, it is natural to ask: what is the best possible approximation ratio that is achieved by a product state (even if such state cannot be efficiently found) ?  Define
\[
\lambda_{\mathrm{sep}}(H)=\max_{\phi_1,\ldots, \phi_n}  \quad \langle \phi_1\otimes \phi_2 \otimes \ldots \phi_n |H| \phi_1\otimes \phi_2 \otimes \ldots \phi_n \rangle.
\]
where the maximization is over normalized single-qubit states $\phi_1,\ldots, \phi_n$. A counterpart to Theorem \ref{thm:alg} and the quasi-NP hardness result of Ref.~\cite{arora2005non} is that there always exists a product state whose energy achieves a constant approximation ratio.
In particular, 
using a quasi-classical representation of quantum spin systems Lieb~\cite{lieb1973classical}
established the following
result~\footnote{The statement of 
Theorem~\ref{thm:lieb} with $H_1=0$ follows from Eq.~(1.1) of Ref.~\cite{lieb1973classical}
by considering spin-$1/2$ particles and taking the zero temperature limit.}.
\begin{theorem}[\bf Lieb 1973]
Suppose $H$ is a traceless $2$-local Hamiltonian. Then 
\begin{equation}
\lambda_{\mathrm{sep}}(H)\geq \frac{1}{9} \lambda_{\max}(H).
\label{eq:constantratio}
\end{equation}
\label{thm:lieb}
\end{theorem}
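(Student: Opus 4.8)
The plan is to pass to the spin coherent-state (quasi-classical) representation of qubit operators, which attaches to $H$ two ``classical symbols'': a \emph{lower} symbol equal to the energy of a product coherent state, and an \emph{upper} symbol that writes $H$ as a mixture of coherent-state projectors. To each qubit $a$ I associate a Bloch unit vector $\vec u_a\in S^2$ and the pure state $\ket{\vec u_a}$ with $\ketbra{\vec u_a}{\vec u_a}=\tfrac12(I+\vec u_a\cdot\vec\sigma)$, and write $\ket{\vec u}=\bigotimes_a\ket{\vec u_a}$. Setting $u_j:=\bra{\vec u}P_j\ket{\vec u}$ (the Bloch component selected by $P_j$), the lower symbol is the product-state energy
\[
H^Q(\vec u)=\bra{\vec u}H\ket{\vec u}=\sum_j D_j u_j+\sum_{i,j}C_{i,j}\,u_iu_j,
\]
where the quadratic part factorizes because $C_{i,j}=0$ within a qubit. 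Since every single-qubit pure state is such a coherent state, $\lambda_{\mathrm{sep}}(H)=\max_{\vec u}H^Q(\vec u)$, so it suffices to show $\max_{\vec u}H^Q(\vec u)\ge\lambda_{\max}(H)/9$.

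The second ingredient is the upper symbol $H^P$, defined by $H=\int H^P(\vec u)\,\ketbra{\vec u}{\vec u}\,d\tilde\nu(\vec u)$, where $d\tilde\nu$ is the product of the rotation-invariant single-qubit measures normalized so that $\int\ketbra{\vec u}{\vec u}\,d\tilde\nu=I$ (for one qubit $\int\ketbra{\vec u}{\vec u}\,\tfrac{d\vec u}{4\pi}=\tfrac12 I$, so one takes $d\tilde\nu_a=2\,d\vec u_a/4\pi$). I construct $H^P$ by linearity from the per-Pauli symbols, so its existence is not an issue. The simplification that replaces the usual partition-function argument is the observation that, for the top eigenvector $\ket\psi$, the weights $|\braket{\vec u}{\psi}|^2\,d\tilde\nu(\vec u)$ form a probability distribution, since their total mass is $\bra\psi I\ket\psi=1$. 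Consequently
\[
\lambda_{\max}(H)=\bra\psi H\ket\psi=\int H^P(\vec u)\,|\braket{\vec u}{\psi}|^2\,d\tilde\nu(\vec u)\le\max_{\vec u}H^P(\vec u).
\]

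It remains to compare the two symbols. A short integral gives the single-qubit upper symbol of a Pauli, $P_j\mapsto 3u_j$, and since the upper symbol of a tensor product is the product of the single-site upper symbols,
\[
H^P(\vec u)=3\sum_j D_j u_j+9\sum_{i,j}C_{i,j}\,u_iu_j=3A(\vec u)+9B(\vec u),
\]
where $A$ and $B$ are the linear and quadratic parts of $H^Q$. The only real difficulty is the mismatch between the factors $3$ and $9$, which I resolve with a global sign flip. Let $\vec u^\star$ maximize $H^P$, so $\lambda_{\max}(H)\le 3A(\vec u^\star)+9B(\vec u^\star)$. Flipping every Bloch vector $\vec u_a^\star\mapsto -\vec u_a^\star$ negates $A$ while leaving $B$ unchanged (each quadratic term touches two distinct sites), so choosing the sign $s\in\{\pm1\}$ with $sA(\vec u^\star)=|A(\vec u^\star)|$ produces a product state with energy
\[
H^Q(s\vec u^\star)=|A(\vec u^\star)|+B(\vec u^\star)\ge\tfrac13 A(\vec u^\star)+B(\vec u^\star)=\tfrac19\bigl(3A(\vec u^\star)+9B(\vec u^\star)\bigr)\ge\tfrac19\lambda_{\max}(H),
\]
using $|A|\ge A/3$. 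Hence $\lambda_{\mathrm{sep}}(H)\ge H^Q(s\vec u^\star)\ge\lambda_{\max}(H)/9$. I expect the technical heart to be the explicit upper-symbol computation with the correct constant ($3$ per site), and the sign-flip step is precisely what upgrades Lieb's purely quadratic bound to Hamiltonians with a nonzero one-local term $H_1$.
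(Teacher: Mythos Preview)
Your argument is correct. The upper-symbol computation ($P_j\mapsto 3u_j$, hence $P_iP_j\mapsto 9u_iu_j$ across distinct qubits) is right, the convexity step $\lambda_{\max}(H)\le\max_{\vec u}H^P(\vec u)$ is clean, and the global Bloch flip $\vec u\mapsto -\vec u$ legitimately reconciles the $3$ versus $9$ factors since every quadratic term touches two distinct qubits.

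The paper's proof is genuinely different in flavor. Rather than invoking upper and lower symbols, it first reduces to $H_1=0$ by appending an auxiliary qubit and using time reversal (Lemma~\ref{lem:timereversal}), and then applies the entanglement-breaking depolarizing channel $\mathcal{E}_{1/3}^{\otimes n}$ to the top eigenvector; the resulting separable state has energy exactly $\tfrac19\lambda_{\max}(H)$ because each weight-two Pauli term is simply rescaled by $(1/3)^2$. The two approaches are dual: the channel $\mathcal{E}_{1/3}$ can be written as the continuous measure-and-reprepare map $\rho\mapsto\int|\vec u\rangle\langle\vec u|\,\langle\vec u|\rho|\vec u\rangle\,d\tilde\nu(\vec u)$, which is precisely what links your lower symbol to the upper symbol. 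Your route is closer to Lieb's original quasi-classical derivation and handles the one-local term $H_1$ intrinsically via the sign flip, avoiding the auxiliary-qubit construction; the paper's route is shorter once the reduction is in hand and, as a bonus, exhibits the good separable state as an explicit mixture of Pauli-eigenstate product states (using that the six Pauli eigenstates form a $2$-design), which feeds directly into the rounding step of Theorem~\ref{thm:alg}.
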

Ref.~\cite{lieb1973classical} proved the theorem
for $2$-local Hamiltonians without linear terms (that is, $H_1=0$).
Here we remove this restriction and give a simplified proof
of Lieb's theorem which is based on applying an entanglement-breaking depolarizing channel to each qubit of an eigenvector of $H$ with maximal eigenvalue. 
We also establish that the above approximation ratio is achieved by a tensor product of single-qubit stabilizer states, i.e., a state $|\phi\rangle=|\phi_1\rangle\otimes \ldots \otimes |\phi_n\rangle$  where each $|\phi_i\rangle$ is an eigenstate of one of the Pauli operators $X,Y$ or $Z$ \footnote{The fact that the product states in Theorems \ref{thm:alg} and \ref{thm:lieb} can be taken to have this special form is a consequence of the fact that the six eigenstates of single-qubit Pauli operators $\{ X,Y,Z\}$ form a $2$-design \cite{dankert2009exact}. Any other single-qubit $2$-design could alternatively be used in its place, such as for example the $4$-state one which is used for similar purposes in Ref. \cite{harrow2017extremal}}.

We suspect that the constant $1/9$ appearing in Lieb's theorem is not optimal and leave this as an open question. 
By minimizing the approximation ratio $R(H)=\lambda_{\mathrm{sep}}(H)/ \lambda_{\max}(H)$
numerically over the set of all $2$-local Hamiltonians $H$ we observed that
$\min_H R(H)\approx 1/3$ for $n\le 6$ qubits. 
Along these lines we establish that product states achieve a slightly better constant approximation ratio for the related problem of maximizing the \textit{absolute value} of the energy. In particular, defining $\|H\|_{\mathrm{sep}}=\max \left\{\lambda_{\mathrm{sep}}(H), \lambda_{\mathrm{sep}}(-H)\right\}$, we prove 
\begin{equation}
\|H\|_{\mathrm{sep}}\geq \frac{1}{6}\|H\|.
\label{eq:normbnd}
\end{equation}
The proof, provided in the Appendix~\ref{app:Lieb}, is a variant of a strategy used by H\aa stad to establish approximation guarantees for classical constraint satisfaction problems~\cite{hastad2015improved}, augmented with some extra quantum ingredients such as an entanglement-breaking measurement.

We begin with a lemma that allows us to reduce Theorems \ref{thm:alg}, \ref{thm:lieb} to the special case where the linear term vanishes, i.e., $H_1=0$. Given a general Hamiltonian of the form Eq.~\eqref{eq:Htraceless}, we form the $n+1$-qubit Hamiltonian
\[
H^{\prime}=H_2+Z_{n+1} H_1
\]
which has no linear term. 
\begin{lemma}
$\lambda_{\max}(H^{\prime})=\lambda_{\max}(H)$. Moreover, given any $(n+1)$-qubit product state $\omega$ we can efficiently compute an $n$-qubit product state $\phi$ such that 
\begin{equation}
\langle \phi|H|\phi\rangle\geq \langle \omega |H^{\prime}|\omega\rangle.
\label{eq:energyatleast}
\end{equation}
If $\omega$ is a tensor product of single-qubit stabilizer  states  then so is $\phi$.
\label{lem:timereversal}
\end{lemma}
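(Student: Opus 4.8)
The plan is to exploit the block structure of $H^{\prime}$ with respect to the ancilla qubit $n+1$, combined with an anti-unitary time-reversal symmetry. First I would decompose the state space of qubit $n+1$ into the $\pm 1$ eigenspaces of $Z_{n+1}$. Since $H_1$ and $H_2$ act only on the first $n$ qubits and commute with $Z_{n+1}$, the operator $H^{\prime}=H_2+Z_{n+1}H_1$ is block diagonal: on the $+1$ block it acts as $H_2+H_1=H$, and on the $-1$ block as $H_2-H_1$. Hence
\[
\Lmax{H^{\prime}}=\max\bigl\{\,\Lmax{H_2+H_1},\ \Lmax{H_2-H_1}\,\bigr\},
\]
and the first claim reduces to proving the identity $\Lmax{H_2-H_1}=\Lmax{H_2+H_1}=\Lmax{H}$.

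The key step is this identity, which I would obtain from the sign-flip symmetry $P_j\mapsto -P_j$. This map negates the linear part, $H_1\mapsto -H_1$, while leaving the quadratic part $H_2$ invariant, since each product $P_iP_j$ picks up two sign flips. Crucially, it cannot be implemented by any unitary conjugation: conjugation preserves the single-qubit relation $X_aY_aZ_a=iI$, whereas flipping all three signs would negate it. Instead it is realized by the anti-unitary time-reversal operator $\Theta=\bigl(\prod_{a=1}^n Y_a\bigr)K$, with $K$ complex conjugation in the computational basis; a short check gives $\Theta P_j\Theta^{-1}=-P_j$ for every single-qubit Pauli, so $\Theta(H_2+H_1)\Theta^{-1}=H_2-H_1$. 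Because $\Theta$ is anti-unitary and $H_2\pm H_1$ is Hermitian with real spectrum, conjugation by $\Theta$ preserves eigenvalues, yielding the identity. This anti-unitary argument is where I expect the only real subtlety: one must verify that complex conjugation distributes over operator products and that an anti-unitary similarity preserves the real spectrum rather than conjugating the eigenvalues.

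For the second claim I would write a general product state as $\omega=\psi\otimes\phi_{n+1}$ with $\psi=\phi_1\otimes\cdots\otimes\phi_n$, and use the factorization
\[
\langle\omega|H^{\prime}|\omega\rangle=\langle\psi|H_2|\psi\rangle+z\,\langle\psi|H_1|\psi\rangle,\qquad z=\langle\phi_{n+1}|Z|\phi_{n+1}\rangle\in[-1,1].
\]
Setting $A=\langle\psi|H_2|\psi\rangle$ and $B=\langle\psi|H_1|\psi\rangle$, the target equals $A+zB\le A+|B|$. I would then compare the two candidate $n$-qubit product states $\psi$ and $\Theta\psi$, whose energies under $H=H_2+H_1$ are $A+B$ and $A-B$ respectively (the latter again follows from $\Theta H_2\Theta^{-1}=H_2$ and $\Theta H_1\Theta^{-1}=-H_1$, using that anti-unitary conjugation complex-conjugates expectation values of Hermitian operators). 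Choosing $\phi=\psi$ when $B\ge 0$ and $\phi=\Theta\psi$ otherwise gives $\langle\phi|H|\phi\rangle=A+|B|\ge\langle\omega|H^{\prime}|\omega\rangle$, which is Eq.~\eqref{eq:energyatleast}. Evaluating the sign of $B$ and applying $\Theta$ are efficient, and since $\Theta$ factorizes as $\bigotimes_a Y_a K$ it maps each single-qubit stabilizer state to another single-qubit stabilizer state, so the tensor-product-of-stabilizers property is preserved.
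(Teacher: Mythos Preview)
Your proposal is correct and follows essentially the same approach as the paper: the paper also block-diagonalizes $H'$ via $Z_{n+1}$ and uses the time-reversal map $(Y^{\otimes n}(\cdot)Y^{\otimes n})^T$, which is precisely your anti-unitary $\Theta=Y^{\otimes n}K$ acting on Hermitian operators, and then constructs $\phi$ as either $\psi$ or $\Theta\psi$. The only cosmetic difference is that the paper first replaces $\phi_{n+1}$ by a $Z$-eigenstate $|z\rangle$ (using that the energy is affine in $z$) and then branches on $z\in\{0,1\}$, whereas you branch directly on the sign of $B=\langle\psi|H_1|\psi\rangle$; the two rules are equivalent.
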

\begin{proof}
Since $Z_{n+1}$ commutes with $H^{\prime}$, all eigenvalues of $H^{\prime}$ are either eigenvalues of $H_1+H_2$ or $H_2-H_1$. The operator $H_2-H_1$ is obtained from $H$ by the \textit{time-reversal map}: 
\begin{equation}
(Y^{\otimes n} (H_2+H_1 )Y^{\otimes n})^T=H_2-H_1
\label{eq:timerev}
\end{equation}
where $^T$ indicates the matrix transpose (in the computational basis). Since conjugation by the unitary operator $Y^{\otimes  n}$ and the matrix transpose operation both preserve the spectrum, we see that $H_2-H_1$ and $H_2+H_1$ have the same eigenvalues, and thus so do $H, H^{\prime}$. 

Now suppose we are given an $n+1$-qubit product state $|\omega\rangle=|\omega_1\otimes \ldots \otimes \omega_{n+1}\rangle$. Since $Z_{n+1}$ commutes with $H^{\prime}$, one of the product states
\[
|\omega(z)\rangle=|\omega_1\otimes \ldots \otimes \omega_{n}\rangle\otimes |z\rangle \qquad z\in \{0,1\}
\]
has energy at least that of $\omega$.  If $z=0$ we take $|\phi=|\omega_1\otimes \ldots \otimes \omega_{n}\rangle$ while if $z=1$ we take
\[
|\phi\rangle=Y^{\otimes n}|\omega_1^{\star}\otimes \ldots \otimes \omega_{n}^{\star}\rangle
\]
where $^{\star}$ indicates the complex conjugate in the computational basis. One can then directly check (using Eq.~\eqref{eq:timerev} for the case $z=1$) that $\phi$ satisfies Eq.~\eqref{eq:energyatleast}.
\end{proof}
\begin{proof}[\bf Proof of Theorem \ref{thm:lieb}]
We fix $H_1=0$ below; Lemma \ref{lem:timereversal} implies this is without loss of generality. For any $\delta\in [0,1]$ let $\mathcal{E}_{\delta}$ be a single-qubit depolarizing channel defined by its action on the basis $\{I,X,Y,Z\}$ of $2\times 2$ Hermitian operators:
\begin{equation}
\mathcal{E}_{\delta}(I)=I \qquad \mathcal{E}_{\delta}(P)=\delta P \qquad P\in \{X,Y,Z\}.
\label{eq:depdef}
\end{equation}
When $\delta=1$ we recover the identity channel, and when $\delta=0$ the channel is maximally noisy. It is well known that below a critical value $\delta\leq \frac{1}{3}$ the depolarizing channel is entanglement breaking \cite{horodecki2003entanglement, ruskai2003qubit}. As a consequence, for any $n$-qubit state $\rho$, the depolarized state $\sigma=\mathcal{E}_{\frac{1}{3}}^{\otimes n} (\rho)$ is separable (i.e, a mixture of product states). One way to see this explicitly is to use the identity
\begin{equation}
\mathcal{E}_{\frac{1}{3}}(R)=\frac{1}{3} \sum_{Q\in \{\pm X,\pm Y,\pm Z\}}  \langle Q|R |Q\rangle \;  |Q\rangle\langle Q|.
\label{eq:Q}
\end{equation}
Here $|\pm Q\rangle$ is the eigenvector of the single-qubit Pauli matrix $Q$ with eigenvalue $\pm 1$. Eq. ~\eqref{eq:Q} shows that a state 
\[
\sigma=\mathcal{E}_{\frac{1}{3}}^{\otimes n}(\rho)
\] 
can be prepared by measuring each qubit of $\rho$ in the Pauli $X,Y$ or $Z$ basis uniformly at random, and is therefore clearly separable. Indeed, $\sigma$ is always a probabilistic mixture of product states $|\phi_1\rangle\otimes\ldots \otimes |\phi_n\rangle$ such that each $\phi_i$ is an eigenstate of either $X,Y$ or $Z$. Moreover, using Eq. \eqref{eq:depdef} we see that expectation values in the state $\sigma$ are simply related to those of $\rho$:
\begin{equation}
\mathrm{Tr}(\sigma P_{j_1} P_{j_2}\ldots P_{j_L})=\frac{1}{3^L}\mathrm{Tr}(\rho P_{j_1} P_{j_2}\ldots P_{j_L})
\label{eq:rescale}
\end{equation}
for $L$ Pauli operators acting on distinct qubits. 

Now consider an $n$-qubit state $\psi$ satisfying $\langle \psi|H|\psi\rangle=\lambda_{\max}(H)$. By the above argument the depolarized state 
\[
\sigma=\mathcal{E}_{\frac{1}{3}}^{\otimes n}(|\psi\rangle\langle\psi|)
\]
is separable and using Eq.~\eqref{eq:rescale} gives
\begin{equation}
\lambda_{\mathrm{sep}}(H)\geq \mathrm{Tr}(\sigma H)=\frac{1}{9}\langle \psi|H| \psi\rangle
\end{equation}
\end{proof}
We prove Theorem \ref{thm:alg} following essentially the same steps used in  Ref.~\cite{charikar2004maximizing} for the classical case.
\begin{proof}[\bf Proof of Theorem \ref{thm:alg}]
Below we assume $H_1=0$ without loss of generality (due to Lemma \ref{lem:timereversal}).

Let $\mathrm{Herm}(m)$ be the set of $m\times m$ Hermitian matrices.
Consider the following semidefinite program:
\begin{align}
\label{SDP1}
&\mbox{maximize $\mathrm{Tr}(CM)$ over $M\in \mathrm{Herm}(3n)$}\\
\label{SDP2}
&\mbox{subject to $M\ge 0$ and $M_{i,i}=1$ for all $i$}.
\end{align}
The first step of the algorithm is to compute the optimal solution $M$, which can be done in polynomial time using standard techniques. Note that $M$ provides an upper bound 
\begin{equation}
\label{eq:relax1}
\mathrm{Tr}(CM) \ge \lambda_{max}(H).
\end{equation}
Indeed, if $\psi$ is a normalized $n$-qubit state such that 
$\lambda_{max}(H)=\langle \psi|H|\psi\rangle$
then the matrix $K$ with $K_{i,j}=\langle \psi|P_i P_j|\psi\rangle$ is a feasible solution of the SDP satisfying $\mathrm{Tr}(CK)=\lambda_{max}(H)$.

We may assume wlog that $M$ is a real matrix (otherwise, replace $M$ by $(M+M^*)/2$).
Then one can represent $M$ as 
\begin{equation}
\label{v}
M_{i,j}=\langle v^i|v^j\rangle
\end{equation}
for some unit vectors $v^1,v^2,\ldots,v^{3n+1}\in \RR^{3n+1}$. Let $c=O(1)$ be a constant to be chosen later. The algorithm proceeds as described in the following pseudocode.
\begin{center}
\fbox{\parbox{1\linewidth}{
\begin{algorithmic}
\State{$T\leftarrow c\sqrt{\log(n)}$}
\State{$\ket{r}\leftarrow $ vector of $3n$ i.i.d.~$N(0,1)$ random variables}
 \For{$i=1,\ldots,3n$}
  \State{$z_i\leftarrow \langle r|v^i\rangle/T$}
  \If {$|z_i|>1/\sqrt{3}$} 
      \State{$y_i\leftarrow \mathsf{sgn}(z_i)/\sqrt{3}$}
  \Else
     \State{$y_i\leftarrow z_i$}  
  \EndIf
 \EndFor\\
\Return{$\rho=\rho_1\otimes\cdots\otimes\rho_n$, where\\ $\rho_a=1/2 (I+y_{3a-2}P_{3a-2}+y_{3a-1}P_{3a-1}+y_{3a}P_{3a})$. }
\end{algorithmic}
}}
\end{center}
Consider the output state $\rho$ of this algorithm. Since $\rho$ is a product state, one has $\tr(\rho P_i P_j)=\tr(\rho_i P_i)\tr(\rho_jP_j)=y_iy_j$ if $P_i$ and $P_j$ act on different qubits. Therefore
\begin{align}
\tr(H\rho)&=y^T C y.\label{eq:quadraticpart}
\end{align}
Below we establish the following approximation guarantee :
\begin{lemma}
We may choose $c=O(1)$ such that
\begin{align}
\mathbb{E}_r\left[\tr(\rho H)\right] &\geq 
\frac{\lambda_{\max}(H)}{ O(\log{n})}. \label{eq:approximationratio}
\end{align}
\label{lem:approx}
\end{lemma}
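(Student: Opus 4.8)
The plan is to follow the Charikar--Wirth rounding analysis, the only real work being to bound how much the clipping step $z_i\mapsto y_i$ degrades the quadratic form. Since $\tr(\rho H)=y^TCy$ by Eq.~\eqref{eq:quadraticpart} (recall $H_1=0$, so $H=H_2$), it suffices to lower-bound $\EE_r[y^TCy]$. I would begin by recording two clean facts. Because $z_i=\langle r|v^i\rangle/T$ with $r$ a standard Gaussian vector and the $v^i$ unit vectors realizing $M$, the $z_i$ are jointly Gaussian with $\EE_r[z_iz_j]=M_{i,j}/T^2$; hence $\EE_r[z^TCz]=\tr(CM)/T^2\ge \Lmax{H}/T^2$ by the relaxation bound Eq.~\eqref{eq:relax1}. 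This is the main term, of order $\Lmax{H}/(c^2\log n)$, so it already has the right size and everything reduces to showing the clipping correction is smaller.

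To that end I would write $y_i=\tau(z_i)$ for the clip function $\tau$, and decompose $y_i=z_i-w_i$ where the overflow $w_i=z_i-\tau(z_i)$ is supported on the rare event $\{|z_i|>1/\sqrt3\}$, whose probability $p=\prob{|z_i|>1/\sqrt3}\le e^{-T^2/6}=n^{-c^2/6}$ is polynomially small. Expanding and using $C=C^T$ gives
\[
\EE_r[y^TCy]=\EE_r[z^TCz]-2\sum_{i,j}C_{i,j}\EE_r[z_iw_j]+\EE_r[w^TCw].
\]
For the cross term I would apply Gaussian integration by parts (Stein's lemma): the function $\chi(z)=z-\tau(z)$ has a.e.\ derivative $\mathbf{1}[|z|>1/\sqrt3]$, so for $i,j$ on distinct qubits $\EE_r[z_iw_j]=\mathrm{Cov}(z_i,z_j)\,\EE_r[\chi'(z_j)]=(M_{i,j}/T^2)\,p$, whence this term equals exactly $-2p\,\tr(CM)/T^2$, a negligible fraction $2p$ of the main term.

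The step I expect to be the main obstacle is the overflow--overflow term $\EE_r[w^TCw]$, whose sign is uncontrolled and which I would bound only in absolute value via Cauchy--Schwarz, $|\EE_r[w^TCw]|\le(\max_i\EE_r[w_i^2])\sum_{i,j}|C_{i,j}|$. The difficulty is that $\sum_{i,j}|C_{i,j}|$ need not be comparable to the SDP value $\tr(CM)$, so this crude bound looks useless a priori. The resolution is to invoke the Harrow--Montanaro bound Eq.~\eqref{eq:hm}, which with $H_1=0$ gives $\sum_{i,j}|C_{i,j}|\le Kn\,\Lmax{H}\le Kn\,\tr(CM)$. Combined with the tail estimate $\max_i\EE_r[w_i^2]\le\EE_r[z_i^2\mathbf{1}[|z_i|>1/\sqrt3]]=O(n^{-c^2/6}/\sqrt{\log n})$, this forces $|\EE_r[w^TCw]|=O(Kn^{1-c^2/6}/\sqrt{\log n})\,\tr(CM)$.

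Finally I would choose $c$ to be a constant with $c^2>6$ (say $c=3$). Then $p\to0$ and the overflow--overflow bound decays polynomially, so for large $n$ both corrections are $o(1/\log n)\cdot\tr(CM)$ and are dominated by the main term. This yields $\EE_r[y^TCy]\ge \tr(CM)/(4c^2\log n)\ge \Lmax{H}/O(\log n)$, which is Eq.~\eqref{eq:approximationratio}. The conceptual crux is thus the observation that, although $\sum_{i,j}|C_{i,j}|$ can be enormous, it is only polynomially larger than $\tr(CM)$, and this polynomial factor is killed by the super-polynomially small clipping tail once the rescaling $T=c\sqrt{\log n}$ is in place.
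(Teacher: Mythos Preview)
Your proof is correct and follows the same architecture as the paper's: both identify the main term $\EE_r[z^TCz]=\tr(CM)/T^2\ge\Lmax{H}/T^2$, control the clipping correction via a Gaussian tail of order $e^{-\Omega(T^2)}$, absorb the factor $\sum_{i,j}|C_{i,j}|$ through the Harrow--Montanaro inequality $\sum_{i,j}|C_{i,j}|\le Kn\,\Lmax{H}$, and then choose $c$ large enough so that the $n$ is killed. The only technical difference is in bookkeeping the correction: the paper treats $\Delta_{i,j}=z_iz_j-y_iy_j$ as a single object and bounds $\EE_r|\Delta_{i,j}|\le e^{-\Omega(T^2)}$ by Cauchy--Schwarz against the indicator of the clipping event (Lemma~\ref{lem:l2}), whereas you write $y=z-w$ and split into a cross term and an overflow--overflow term. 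Your use of Stein's lemma on the cross term is a neat refinement---it shows that piece is exactly a $2p$-fraction of the main term rather than merely bounded---but it does not change the substance of the argument or the final constants beyond the choice of $c$.
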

This is almost what we want to prove. To go from Lemma~\ref{lem:approx} to Theorem \ref{thm:alg}, we need to show that we can efficiently compute a pure product state (of the claimed form) that achieves the same approximation ratio with high probability (rather than merely in expectation).

We shall use the entanglement breaking measurement from the proof of Theorem \ref{thm:lieb} to map the state $\rho$ to a tensor product of single-qubit stabilizer states. To this end let us choose a string $b\in \{X,Y,Z\}^{n}$ uniformly at random and then measure each qubit of $\rho$ in the corresponding single-qubit basis (more precisely, let us simulate this process using our classical description of $\rho$). Suppose the measurement outcome is $s\in \{\pm1\}^n$. The post-measurement state is a product state $|\phi\rangle=|\phi_1\rangle\otimes \ldots \otimes |\phi_n\rangle$ where each $\phi_i$ is an eigenstate of $X,Y$, or $Z$.   Moreover,
\[
\mathbb{E}_{b,s}[|\phi\rangle\langle \phi|]=\mathcal{E}_{1/3}^{\otimes n} (\rho)
\]
where $\mathcal{E}_{\delta}(\cdot)$ is the depolarizing channel defined in Eq.~\eqref{eq:depdef}. Using Eq.~\eqref{eq:rescale} we get
\begin{align}
\mathbb{E}_{b,s,r}\left[\langle \phi| H|\phi\rangle\right]&\geq \mathbb{E}_{r}\left[1/9\tr(\rho H)\right]\nonumber\\
&\geq \Omega(\log^{-1}(n))\frac{1}{9}\lambda_{\max}(H),
\label{eq:expectationbound}
\end{align}
where in the last line we used Eq.~(\ref{eq:approximationratio}). Now using the upper bound $\langle \phi| H|\phi\rangle\leq \lambda_{\max}(H)$ we get
\[
\mathrm{Pr}\left[\langle \phi|H|\phi\rangle \geq \alpha\right]\geq \frac{\mathbb{E}_{b,s,r}\left[\langle \phi| H|\phi\rangle\right]-\alpha}{\lambda_{\max}(H)-\alpha} 
\]
for all $\alpha\leq \lambda_{\max}(H)$. Choosing $\alpha=\frac{1}{2}\mathbb{E}_{b,s,r}\left[\langle \phi| H|\phi\rangle\right]$ and substituting Eq.~\eqref{eq:expectationbound} we see that with probability at least $\Omega(\log^{-1}(n))$ the product state $\phi$ satisfies $\langle \phi|H|\phi\rangle \geq \Omega(\log^{-1}(n))\lambda_{\max}(H)$. Repeating the above procedure $O(\log(n))$ times and choosing the output product state $\phi$ with the highest energy is therefore sufficient to increase the success probability above $2/3$.
\end{proof}

It remains to prove Lemma~\ref{lem:approx}. First we prove that the $y_i$ variables are a good approximation to the $z_i$ variables in the following sense.
\begin{lemma}
Define $\Delta_{i,j}=z_iz_j-y_iy_j$. Then
\begin{align}
\ExpE_r |\Delta_{i,j}|&\leq e^{-\Omega(T^2)}\qquad 0\leq i<j\leq 3n.
\label{eq:L2}
\end{align}
\label{lem:l2}
\end{lemma}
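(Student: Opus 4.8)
The plan is to exploit that each $z_i=\langle r|v^i\rangle/T$ is a centered Gaussian of small variance, so that clipping almost never changes it. Since $v^i$ is a unit vector and $r$ has i.i.d.\ $N(0,1)$ entries, the projection $\langle r|v^i\rangle$ is itself standard normal, whence $z_i\sim N(0,1/T^2)$. Introduce the clipping event $A_i=\{|z_i|>1/\sqrt{3}\}$. By the rule defining $y_i$ we have $y_i=z_i$ off $A_i$, and always $|y_i|\le 1/\sqrt{3}$ together with $|z_i-y_i|=(|z_i|-1/\sqrt{3})_{+}\le |z_i|\,\mathbf{1}_{A_i}$. In particular $\Delta_{i,j}$ vanishes unless $A_i\cup A_j$ occurs, so all the mass comes from the rare tail events.

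First I would telescope the difference,
\[
\Delta_{i,j}=(z_i-y_i)z_j+y_i(z_j-y_j),
\]
and bound it termwise using the two elementary facts above to obtain
\[
|\Delta_{i,j}|\le |z_i|\,|z_j|\,\mathbf{1}_{A_i}+\tfrac{1}{\sqrt{3}}\,|z_j|\,\mathbf{1}_{A_j}.
\]
Taking $\mathbb{E}_r$ then reduces the lemma to estimating two one-dimensional truncated Gaussian moments.

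The key step is to control $\mathbb{E}_r[\,|z_i z_j|\,\mathbf{1}_{A_i}\,]$ even though $z_i$ and $z_j$ are correlated (both being linear in the same vector $r$). Rather than describing the joint law of $(z_i,z_j)$, I would decouple via Cauchy--Schwarz,
\[
\mathbb{E}_r[\,|z_i z_j|\,\mathbf{1}_{A_i}\,]\le \sqrt{\mathbb{E}_r[z_i^2\,\mathbf{1}_{A_i}]}\;\sqrt{\mathbb{E}_r[z_j^2]},
\]
where $\mathbb{E}_r[z_j^2]=1/T^2$ is exact regardless of any correlation, while $\mathbb{E}_r[z_i^2\,\mathbf{1}_{A_i}]$ is the truncated second moment of $N(0,1/T^2)$ above the level $1/\sqrt{3}$. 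Since that threshold sits at $T/\sqrt{3}$ standard deviations, a standard Gaussian-tail estimate gives $\mathbb{E}_r[z_i^2\,\mathbf{1}_{A_i}]=O(T^{-1})e^{-T^2/6}$, and likewise $\mathbb{E}_r[\,|z_j|\,\mathbf{1}_{A_j}\,]=O(T^{-1})e^{-T^2/6}$. Multiplying out, both contributions are $e^{-\Omega(T^2)}$, establishing the claim; the same marginal estimate applies uniformly to every coordinate, so the full range $0\le i<j\le 3n$ is covered without modification.

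The main obstacle is precisely the correlation between coordinates: unlike a product of independent Gaussians, one cannot factor $\mathbb{E}_r[\,|z_iz_j|\,\mathbf{1}_{A_i}\,]$ into a tail probability times an independent moment. The Cauchy--Schwarz decoupling sidesteps this by isolating all dependence on the rare clipping event into a single \emph{marginal} truncated moment, in which only the law of $z_i$ enters, while the companion factor $\mathbb{E}_r[z_j^2]$ is just the marginal variance and is insensitive to the joint structure. Everything else is a routine Gaussian tail computation.
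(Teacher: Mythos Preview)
Your proof is correct and follows essentially the same strategy as the paper: both identify that $\Delta_{i,j}$ is supported on the rare clipping event, use Cauchy--Schwarz to decouple the correlated Gaussians $z_i,z_j$, and then invoke the marginal Gaussian tail $\mathrm{Pr}(|z_i|>1/\sqrt{3})=e^{-\Omega(T^2)}$ together with low-order moment bounds. The only cosmetic differences are that the paper bounds $|\Delta_{i,j}|\le 2|z_iz_j|\chi(r)$ directly (with $\chi$ the indicator of $A_i\cup A_j$) rather than telescoping, and it places the indicator in its own Cauchy--Schwarz factor $\sqrt{\mathbb{E}_r\chi}$ (paired with $\sqrt{\mathbb{E}_r z_i^2z_j^2}\le O(T^{-2})$ via AM--GM and the fourth moment) instead of folding it into a truncated second moment as you do.
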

\begin{proof}
Define
\[
\chi(r)=\left\{ \ba{rcl} 
0 &\mbox{if} & \mbox{$y_i=z_i$ and $y_j=z_j$} \\
1 && \mbox{otherwise}\\
\ea
\right.
\]
Noting that $\Delta_{i,j}=\Delta_{i,j} \chi(r)$ and that $|\Delta_{i,j}|\le 2|z_i z_j|$ one gets
\begin{equation}
\label{L2p1}
\EE_r |\Delta_{i,j}| \le 2 \EE_r |z_i z_j| \chi(r).
\end{equation}
By Cauchy-Schwarz,
$\EE_r f(r) g(r)  \le \sqrt{\EE_r f^2(r)} \cdot \sqrt{\EE_r g^2(r)}$
for any real-valued  functions $f(r)$ and $g(r)$.
Choose $f(r)= |z_i z_j|$ and $g(r)=\chi(r)$.
Since $\|v^i\|=\|v^j\|=1$, random variables
$Tz_i$ and $Tz_j$ are normally distributed according to $N(0,1)$.
Hence 
\begin{equation}
\label{L2p2}
\EE_r z_i^2 z_j^2 \le \frac12 \left(  \EE_r z_i^4 + \EE_r z_j^4 \right) =O(T^{-4})=O(1).
\end{equation}
By the union  bound, 
\begin{equation}
\label{L2p3}
\EE_r \chi(r)^2=
\EE_r \chi(r) \le
 2\mathrm{Pr}(|z_i| >1/\sqrt{3} ) 
= e^{-\Omega(T^2)}.
\end{equation}
Here we again used the fact that $Tz_i\in N(0,1)$.
Combining all above proves Eq.~(\ref{eq:L2}).
\end{proof}

\begin{proof}[\bf Proof of Lemma \ref{lem:approx}]
Using Eq.~\eqref{eq:quadraticpart} we get
\begin{align}
\ExpE_r\left[\tr(H\rho)\right]&=\ExpE_r \left[y^TCy\right]\nonumber\\
&=\EE_r (z^T C z) - \EE_r \mathrm{Tr}(C\Delta)\\
&=\frac{\mathrm{Tr}(CM)}{T^2}  - \EE_r \mathrm{Tr}(C\Delta) \label{eq:energy},
\end{align}
where $\Delta$ is defined in Lemma~\ref{lem:l2}, and in the last line we used the fact that $\EE_r (z_iz_j)=\langle v^i|v^j\rangle/T^2$.

Substituting Eq.~\eqref{eq:relax1} in the first term of Eq.~\eqref{eq:energy} and upper bounding the second term one gets
\begin{align}
\ExpE_r\left[\tr(H\rho)\right] &\ge \frac{\lambda_{\max}(H)}{T^2}  - \sum_{i,j=1}^{3n} |C_{i,j}| \cdot \EE_r |\Delta_{i,j}|\\
& \ge \frac{\lambda_{\max}(H)}{T^2}-e^{-\Omega(T^2)}\sum_{i,j=1}^{3n}|C_{i,j}|\ .
\label{eq:almostdone}
\end{align}
where in the last line we used Lemma \ref{lem:l2}. We bound the second term using the fact that 
\begin{equation}
\sum_{i,j=1}^{3n} |C_{i,j}|\leq Kn \lambda_{\max}(H)
\label{eq:1norm}
\end{equation}
for some absolute constant $K>0$, which follows directly from Eq.~\eqref{eq:hm} and is proved in Ref. \cite{harrow2017extremal}. For completeness we provide an alternative proof of Eq.~\eqref{eq:1norm} in Appendix~\ref{app:LmaxLower}.
From Eqs.~(\ref{eq:almostdone}, \ref{eq:1norm}) we get
\[
\ExpE_r\left[\tr(H\rho)\right] \ge \lambda_{\max}(H)\left(\frac{1}{T^2}-e^{-\Omega(T^2)}K n\right),
\]
where $K=O(1)$. Now we can see that choosing $T=c\sqrt{\log(n)}$ with $c=O(1)$ is sufficient to ensure the term in parentheses is $\Omega(\log^{-1}(n))$.
\end{proof}

\section{Many-body fermionic problems}
\label{sec:fermion}

So far we have viewed a traceless $2$-local Hamiltonian as a quantum generalization of a binary quadratic function. Another physically motivated generalization is a system of fermionic modes with two-body interactions.
The Hilbert space of $n$ fermi modes can be identified with the one of $n$ qubits
and equipped with the standard basis $\{|x\ra\}$, where $x\in \{0,1\}^n$.
Here $x_j=0$ or $x_j=1$ indicate that the $j$-th mode is empty or occupied 
by a fermionic particle.
Define particle annihilation operators $a_1,\ldots,a_n$ such that 
$a_1=|0\ra\la 1|_1$ and 
\[
a_j=Z_1\cdots Z_{j-1} |0\ra\la 1|_j
\]
for $2\le j\le n$. The corresponding creation operators
are defined as $a_1^\dag,\ldots,a_n^\dag$.
They obey commutation rules
\[
a_i a_j=-a_j a_i \quad \mbox{and} \quad a_i^\dag a_j + a_j a_i^\dag = \delta_{i,j} I
\]
for all $i,j$.  Most of the fermionic systems studied in physics can be described by a Hamiltonian
\begin{align}
h&= h_1+h_2 + \omega I
\label{fermiH12}\\
h_1 &= \sum_{p,q=1}^n V_{p,q} a_p^\dag a_q, \qquad
h_2 = \sum_{p,q,r,s=1}^{n} W_{pqrs} a_p^\dag a_q^\dag a_r a_s. \nonumber
\end{align}
Here $V_{p,q}$ and 
$W_{pqrs}$ are complex coefficients chosen such that $h_1$ and $h_2$ are hermitian. 
The last term is an overall energy shift $\omega\in \RR$.
We note that the quadratic term $h_1$ is the fermionic analogue of a $1$-local Hamiltonian
for qubits. In particular, one can  map $h_1$
to a linear combination of single-mode operators $a_j^\dag a_j=(I-Z_j)/2$
by performing a suitable change of basis, see e.g. Ref. \cite{terhal2002classical}.
The quartic term $h_2$ allows for non-trivial two-body interactions between fermions. For example, a notable special case of Eq.~\eqref{fermiH12} are the Hamiltonians which describe molecular structure in quantum chemistry. 

As before, we are interested in approximating the maximum eigenvalue of $h$ denoted 
$\lambda_{\max}(h)$.  Can we match the approximation guarantees of Theorems~\ref{thm:alg},\ref{thm:lieb}
with  appropriate fermionic analogs of product states? 
Natural candidates are Slater determinant states used in the Hartree-Fock method from quantum chemistry.
Recall that a Slater determinant state~$\psi$ can be specified by the number of
particles $0\le k\le n$ and a unitary matrix $U$ of size $n$ such that 
\begin{align}
\ket{\psi}=b_1^\dag\cdots b_k^\dag\ket{0^n}\qquad\textrm{ with }\qquad b_p=\sum_{q=1}^n U_{p,q}a_q.
\end{align}
Let 
\be
\label{LSlater}
\Lslater{h} = \max_{\psi} \la \psi |h|\psi\ra,
\ee
where the maximization is over all Slater determinant states $\psi$ (with $k=0,1,\ldots,n$).
We can establish the following weaker version  of Theorem~\ref{thm:alg}. 
\begin{theorem}
There is a classical algorithm which takes as input a traceless Hamiltonian $h$ of the form Eq.~\eqref{fermiH12} and
outputs a Slater determinant state~$\psi$ such that 
\be
\label{SlaterPart1}
\langle\psi|h|\psi\rangle \ge \frac{\Lslater{h}}{O(\log{n})}
\ee
with probability at least $2/3$. The algorithm has runtime
$poly(n)$.
\label{thm:slater}
\end{theorem}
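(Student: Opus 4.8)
The plan is to follow the template of Theorem~\ref{thm:alg}: relax the optimization over Slater determinants to a semidefinite program whose optimum upper bounds $\Lslater{h}$, and then round the optimizer back to a genuine Slater determinant using the Gaussian-with-truncation scheme of Charikar and Wirth. I would encode a Slater determinant $\psi$ by its one-particle density matrix $\gamma_{pq}=\langle\psi|a_p^\dagger a_q|\psi\rangle$, which for such a state is a rank-$k$ orthogonal projector. By Wick's theorem $\langle a_p^\dagger a_q^\dagger a_r a_s\rangle=\gamma_{ps}\gamma_{qr}-\gamma_{pr}\gamma_{qs}$, so
\[
\langle\psi|h|\psi\rangle=\sum_{p,q}V_{p,q}\gamma_{pq}+\sum_{p,q,r,s}W_{pqrs}(\gamma_{ps}\gamma_{qr}-\gamma_{pr}\gamma_{qs})+\omega
\]
is an indefinite \emph{quadratic} form in the entries of $\gamma$. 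This is the essential difference from the qubit setting, where the two-local energy was linear in the pair correlators $\langle P_iP_j\rangle$; here the quartic term makes the objective quadratic in the one-particle correlators, so a single positive-semidefinite constraint no longer suffices and a genuine degree-two moment relaxation is required. The linear term $h_1$ will be carried along as a first-moment contribution and recovered in the rounding by homogenization, playing the role that the extra vector $v^{3n+1}$ plays in the algorithm of Theorem~\ref{thm:alg}.

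Second, I would set up the relaxation as a degree-two moment program. Introduce a Hermitian moment matrix $M$ indexed by a distinguished symbol $\ast$ together with the pairs $(p,q)$, whose first moments $M_{\ast,(pq)}$ play the role of $\gamma_{pq}$ and whose second moments $M_{(pq),(rs)}$ play the role of the products $\gamma_{pq}\gamma_{rs}$. The energy above becomes a linear functional $\mathrm{Tr}(CM)$ for a cost matrix $C$ assembled from $V$ and the antisymmetrised coefficients $W_{pqrs}$. I would impose $M\succeq 0$, the normalization $M_{\ast,\ast}=1$, the linear matrix inequalities $0\le\gamma\le I$ enforcing that $\gamma$ is a valid one-particle density matrix, and---crucially---the projector relations $\gamma^2=\gamma$, which lift to the \emph{linear} constraints $\sum_t M_{(pt),(tq)}=M_{\ast,(pq)}$ on the moment matrix. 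Every Slater determinant yields a feasible point by reading off its actual $\gamma$ and the associated products, so $\mathrm{Tr}(CM)\ge \Lslater{h}$, exactly as Eq.~\eqref{eq:relax1} bounded $\lambda_{\max}(H)$, and the program is solvable in time $\mathrm{poly}(n)$.

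Third, I would round. Writing $M_{(pq),(rs)}=\langle v^{pq}|v^{rs}\rangle$ for real vectors $v^{pq}$, I sample a Gaussian vector $|r\rangle$, set $z_{pq}=\langle r|v^{pq}\rangle/T$ with $T=c\sqrt{\log n}$, and truncate at $\pm 1/\sqrt3$ to obtain $y_{pq}$, exactly as in the pseudocode of Theorem~\ref{thm:alg}. Since $\EE_r(z_{pq}z_{rs})=M_{(pq),(rs)}/T^2$, the verbatim analogue of Lemma~\ref{lem:l2} gives $\EE_r|z_{pq}z_{rs}-y_{pq}y_{rs}|\le e^{-\Omega(T^2)}$, so in expectation the truncated variables reproduce $\mathrm{Tr}(CM)/T^2$ up to an error of order $e^{-\Omega(T^2)}\sum_{pqrs}|W_{pqrs}|$. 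To control this error I would prove a fermionic counterpart of the coefficient bound Eq.~\eqref{eq:1norm}, namely $\sum_{pqrs}|W_{pqrs}|+\sum_{pq}|V_{pq}|\le \mathrm{poly}(n)\,\Lslater{h}$, by a Harrow--Montanaro-style averaging over Haar-random orbital rotations (the fermionic analogue of the $2$-design argument behind Eq.~\eqref{eq:hm}) that uses the traceless hypothesis. With $c=O(1)$ large enough the rounded objective is then at least $\Lslater{h}/O(\log n)$ in expectation, and as in Theorem~\ref{thm:alg} a Markov-inequality argument followed by $O(\log n)$ independent repetitions boosts the success probability above $2/3$.

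The hard part is converting the rounded numbers $(y_{pq})$ into an \emph{actual} Slater determinant. Unlike the qubit case, where the rounded variables $y_i$ directly parametrize a valid single-qubit density operator $\rho_a$ and an entanglement-breaking measurement then produces a stabilizer product state, here the matrix $(y_{pq})$ need not be a projector, and there is no depolarizing channel mapping an arbitrary state to a Slater determinant. I would recover a Slater state by taking the eigendecomposition of the symmetrised matrix $(y_{pq})$, rounding its natural occupations to $0/1$, and optimizing over the resulting particle number $k$. Because the energy is a non-concave quadratic over the non-convex manifold of projectors, the crux of the analysis is to show that this projection degrades the energy by at most a constant factor---equivalently, that the matrix-level truncation error does not destroy the $\Omega(\log^{-1}n)$ guarantee---and it is precisely here that the fermionic $1$-norm bound replacing Eq.~\eqref{eq:1norm} and the boundedness $|\gamma_{pq}|\le 1$ of projector entries must be combined. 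This projector-rounding step, which has no counterpart in the proof of Theorem~\ref{thm:alg}, is where I expect the bulk of the quantitative work to lie.
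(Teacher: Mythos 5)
Your setup is on the right track---you correctly identify that the energy is \emph{quadratic} in the one-particle correlators (unlike the qubit case, where it was linear in the pair correlators), your feasibility direction $\mathrm{Tr}(CM)\ge\Lslater{h}$ is sound, and a moment-type relaxation is indeed what the paper uses, via the Qp-Oc framework of So (Appendix~\ref{app:QPOC}). But the proposal has two genuine gaps, both located exactly where you yourself flag uncertainty. First, the entrywise Charikar--Wirth truncation at $\pm 1/\sqrt{3}$ does not produce a feasible point: the constraint in this problem is an \emph{operator-norm} constraint ($0\le\gamma\le I$, equivalently $\|X\|\le 1$ for the Majorana covariance matrix), and a matrix whose entries are all bounded by $1/\sqrt3$ can still have operator norm $\Theta(n)$. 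Controlling the operator norm of the Gaussian-rounded matrix is precisely the nontrivial content of So's rounding (Lemma~\ref{lemma:So}), which shows $\|X^\alpha\|\le O(\sqrt{\log d})$ with high probability and rescales, paying the $O(\log d)$ factor certified by the integrality-gap bound of Lemma~\ref{lemma:So1}. Your relaxation, with lifted projector constraints $\gamma^2=\gamma$, has no proven analogue of Lemma~\ref{lemma:So1}, and your rounding analysis (the verbatim analogue of Lemma~\ref{lem:l2} plus a conjectured $1$-norm bound $\sum|W_{pqrs}|\le\mathrm{poly}(n)\,\Lslater{h}$, also unproved) never yields a point in the feasible set at all.

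Second, and more fundamentally, the ``projector rounding'' step you defer---taking the eigendecomposition of $(y_{pq})$ and rounding occupations to $0/1$ with only a constant-factor loss---has no apparent proof: the objective is an indefinite quadratic on a non-convex manifold, and rounding natural occupations can change individual Wick contractions uncontrollably. The paper's proof (Appendix~\ref{app:SlaterProof}) shows this step is \emph{unnecessary}. After reducing to $h_1=0$ via the fermionic time-reversal Lemma~\ref{lem:fermi_timereversal} (rather than homogenization), it characterizes Slater determinants among Gaussian states by the linear conditions $\mathrm{Tr}(\rho\,a_pa_q)=0$, which cut out a linear subspace $\calL'$ of antisymmetric covariance matrices. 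The key observation is that \emph{every} $X\in\calL'$ with $\|X\|\le 1$ is the covariance matrix of a mixed Gaussian state, which (i) satisfies Wick's theorem, so its energy equals the quadratic $F(X)$ exactly, and (ii) is a convex mixture of pure Slater determinants (shown by diagonalizing $Q_{ij}=\mathrm{Tr}(\rho\, a_i^\dag a_j)$). Hence $\Lslater{h}=\max_{X\in\calL',\,\|X\|\le1}F(X)$ is \emph{exactly} a Qp-Oc instance, with no relaxation gap at the level of states: one applies So's algorithm as a black box and then samples a pure Slater determinant from the mixture (via Eq.~\eqref{GaussMixed}), amplifying as after Eq.~\eqref{eq:expectationbound}. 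This convex-hull fact---that the quadratic energy at a fractional $\gamma$ is still realized by a physical state decomposing into Slater determinants---is the missing idea that dissolves both your infeasibility problem and your projector-rounding problem, and it also removes any need for a fermionic analogue of the coefficient bound Eq.~\eqref{eq:1norm}, since So's guarantee is already multiplicative relative to $\Lslater{h}$.
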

The approximation ratio in    Eq.~(\ref{SlaterPart1}) is close to optimal. 
Indeed,  fermionic Hamiltonians of the form Eq.~(\ref{fermiH12}) subsume
classical quadratic functions $F(x)$ defined in Eq.~(\ref{eq:f}). 
The latter can be  
expressed by a diagonal Hamiltonian $h$
such that $h|x\ra=F(x)|x\ra$ for all $x$.
Then $\Lslater{h}=\Lmax{h}=F_{max}$.
Moreover, $h$ is  quadratic in Pauli operators
$Z_j=a_j a_j^\dag - a_j^\dag a_j$, that is,
$h$ has the form Eq.~(\ref{fermiH12}).
The classical hardness result~\cite{arora2005non} 
then implies that improving the approximation ratio in Eq.~(\ref{SlaterPart1})
beyond $\Omega(\log^{-\gamma}(n))$ is  quasi-NP hard 
for some $\gamma>0$.

The proof of Theorem~\ref{thm:slater}, given in Appendix~\ref{app:SlaterProof},
relies on the fact that the optimization problem defining  $\Lslater{h}$ can be
rephrased as a  quadratic optimization with orthogonality constraints (known as Qp-Oc)~\cite{nemirovski2007sums,so2009improved,so2011moment}.
The latter admits an efficient approximation algorithm based on 
a suitable SDP-type relaxation~\cite{so2009improved} similar to the one
considered in Section~\ref{sec:qubit}.
For completeness, we provide all relevant facts regarding Qp-Oc in
Appendix~\ref{app:QPOC}.

Theorem~\ref{thm:slater} motivates the question of how well
$\Lslater{h}$ approximates the largest eigenvalue $\Lmax{h}$
and whether one can establish a fermionic analogue of 
Lieb's theorem for Slater determinants.
Unfortunately,  we show that the   ratio
$\Lslater{h}/\Lmax{h}$ can be as small as $O(n^{-1})$.
Namely, we 
present  a family of traceless fermionic Hamiltonians  $h$
of the form Eq.~\eqref{fermiH12}
for which $\lambda_{\max}(h)$ scales as $\Omega(n^2)$
whereas the energy of any 
Slater determinant state is at most $O(n)$. 
To state this result consider a variant  of  Richardson's Hamiltonian \cite{richardson1965exact,richardson} 
defined as
\begin{align}
h=P^\dagger P \qquad \qquad P=\sum_{j=1}^{n/2} a_{2j-1}a_{2j}.\label{eq:richardson}
\end{align}
Here we assume that the number of modes $n=2N$ is even.
Note that $h$ has the form Eq.~(\ref{fermiH12}) with $h_1=0$. Richardson's model is exactly solvable~\cite{richardson1965exact}.
In particular, 
\be
\label{Lmax}
\lambda_{\max}(h) = \frac{N(N+2) + \epsilon_N}4,
\ee
where $\epsilon_N={N \pmod 2} \in \{0,1\}$.
For completeness, we provide a simple proof of Eq.~(\ref{Lmax}) in Appendix~\ref{app:Richardson}.
\begin{lemma}\label{lemma:slaterlimitation}
Let $h$ be Richardson's Hamiltonian 
acting on $n$ fermi modes. Then 
\be
\label{SlaterUpper}
 \frac{\Lslater{h}}{\Lmax{h}}\leq \frac{8}{n}.
\ee
\end{lemma}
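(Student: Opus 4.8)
The plan is to exploit the structure $h=P^\dagger P$, so that for any Slater determinant $\psi$ the energy is $\la\psi|h|\psi\ra=\|P\ket{\psi}\|^2=\la\psi|P^\dagger P|\psi\ra\ge 0$, and to upper bound this quantity directly. Writing $P^\dagger P=\sum_{j,k=1}^{N} a_{2j}^\dagger a_{2j-1}^\dagger a_{2k-1}a_{2k}$ with $N=n/2$, I would evaluate each term using Wick's theorem. The key input is that a Slater determinant is a particle-number-conserving fermionic Gaussian state: all higher correlators factor into two-point functions, the anomalous contractions $\la\psi|a_p a_q|\psi\ra$ vanish, and the only surviving object is the single-particle density matrix $\rho_{pq}=\la\psi|a_p^\dagger a_q|\psi\ra$, which for a determinant with $k$ occupied orbitals is an orthogonal projector of rank $k$ (so $\rho=\rho^\dagger=\rho^2$ and $\tr\rho=k$).

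Carrying out the contraction (only the two ``normal'' pairings survive) collapses the energy into a sum of $2\times 2$ determinants of $\rho$. Grouping the modes into the $N$ pairs $\{2j-1,2j\}$ and letting $\rho^{(jk)}$ denote the $2\times2$ block of $\rho$ with rows in pair $j$ and columns in pair $k$, one finds $\la\psi|h|\psi\ra=\sum_{j,k=1}^N \det\rho^{(jk)}$. The next step is the crucial repackaging: writing $\rho=\sum_{m=1}^k\ketbra{w_m}{w_m}$ for orthonormal occupied orbitals $w_m\in\CC^n$ and applying the Cauchy--Binet formula to each $\det\rho^{(jk)}$, the double sum becomes a manifest sum of squares, $\la\psi|h|\psi\ra=\sum_{1\le m<m'\le k}|w_m^T\Omega w_{m'}|^2=\tfrac12\|W^T\Omega W\|_F^2$, where $\Omega$ is the real antisymmetric $n\times n$ matrix defining $P=\tfrac12\sum_{p,q}\Omega_{pq}a_p a_q$ (block-diagonal with $2\times2$ blocks $\left(\begin{smallmatrix}0&1\\-1&0\end{smallmatrix}\right)$, hence $\|\Omega\|=1$) and $W=[\,w_1|\cdots|w_k\,]$ has orthonormal columns.

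From this closed form the bound is immediate. Since $W^\dagger W=I_k$, the operator norm of $W$ (and of $W^T$) is $1$, so submultiplicativity of the Frobenius norm gives $\|W^T\Omega W\|_F\le\|W^T\|\cdot\|\Omega\|\cdot\|W\|_F=\|W\|_F=\sqrt{k}$, whence $\la\psi|h|\psi\ra\le k/2\le n/2=N$. Maximizing over Slater determinants yields $\Lslater{h}\le N$. Combining with the exact value recorded in Eq.~\eqref{Lmax}, which gives $\Lmax{h}\ge N(N+2)/4\ge N^2/4$, produces $\Lslater{h}/\Lmax{h}\le N/(N^2/4)=4/N=8/n$, as claimed.

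The hard part is the middle step --- converting the raw Wick expansion into the manifestly nonnegative and easily bounded form $\tfrac12\|W^T\Omega W\|_F^2$. Getting the fermionic signs right so that the four-point function reduces cleanly to $\det\rho^{(jk)}$, and then recognizing (via Cauchy--Binet) the sum of these block determinants as a squared Frobenius norm, is where the real content lies; by contrast the final operator-norm estimate and the comparison with $\Lmax{h}$ are routine. A useful consistency check along the way is that the resulting expression is automatically nonnegative, matching $\la\psi|h|\psi\ra=\|P\ket{\psi}\|^2\ge 0$.
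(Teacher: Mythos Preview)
Your proof is correct and follows the same overall arc as the paper: apply Wick's theorem to the Slater determinant (using that anomalous contractions vanish and the one-body density matrix $Q$ is a rank-$k$ projector), bound $\la\psi|h|\psi\ra\le k/2\le N$, and compare with Eq.~\eqref{Lmax}. The difference lies in the bounding step. The paper simply applies the elementary inequality $2|ab|\le|a|^2+|b|^2$ term by term to the Wick expansion, obtaining $\la\psi|h|\psi\ra\le\tfrac12\sum_{i,j}|Q_{ij}|^2=\tfrac12\tr(Q^\dagger Q)=k/2$ in one line. You instead recognize the Wick sum as $\sum_{j,k}\det\rho^{(jk)}$, invoke Cauchy--Binet to rewrite it as $\tfrac12\|W^T\Omega W\|_F^2$, and then bound via operator-norm submultiplicativity. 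Your route is longer but yields an exact closed form for the Slater energy (manifestly nonnegative, as you note), which is a pleasant structural observation; the paper's AM--GM step is more elementary and direct but gives only an upper bound. Both land on the identical estimate $\la\psi|h|\psi\ra\le k/2$.
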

We note that Richardson's Hamiltonian $h$ can be made traceless by
performing an energy shift $h\gets h- (n/8) I$. Since $\lambda_{\max}(h)$ 
is proportional to $n^2$, 
such an energy shift does not affect the conclusion that
Slater determinants
achieve approximation ratio at most $O(n^{-1})$.
A physical intuition behind Lemma~\ref{lemma:slaterlimitation}
comes from the fact that Slater determinants
cannot describe states with a  superconducting
order parameter.
At the same time, Richardson's Hamiltonian
Eq.~(\ref{eq:richardson})
describes a system of fermions with attractive 
interactions~\footnote{Recall that the sign of $h$ has to be flipped if one is interested in the 
minimum rather than maximum eigenvalue.}
which favor a superconducting order.

\begin{proof}[\bf Proof of Lemma~\ref{lemma:slaterlimitation}]
Let $\psi$ be a  Slater determinant state with $k$ particles.
Define a covariance matrix
\[
Q_{ij}=\langle \psi |a_i^{\dagger} a_j|\psi\rangle \qquad 1\leq i,j\leq n.
\]
One can easily check that 
$Q$ is a rank-$k$ projector.
The fermionic version of Wick's theorem asserts that 
\[
\la \psi |(a_p a_q)^\dag a_r a_s |\psi\ra =
Q_{p,r} Q_{q,s} - Q_{p,s} Q_{q,r}
\]
for any tuple of modes $p,q,r,s$.
This gives
\begin{align*}
\la \psi|h|\psi\ra
&=\sum_{i,j=1}^{N} \langle \psi|a_{2i}^{\dagger} 
a_{2i-1}^{\dagger} a_{2j-1}a_{2j}|\psi\rangle\\
&=\sum_{i,j=1}^{N} Q_{2i,2j}Q_{2i-1,2j-1}-Q_{2i,2j-1}Q_{2i-1,2j}.
\end{align*}
The inequality $2|ab|\leq |a|^2+|b|^2$ gives
\begin{align}
\la \psi|h|\psi\ra
&\le \frac{1}{2}\sum_{i,j=1}^{N}\big(|Q_{2i,2j}|^2+|Q_{2i-1,2j-1}|^2\nonumber\\&+|Q_{2i,2j-1}|^2+|Q_{2i-1,2j}|^2\big)\nonumber\\
&=\frac{1}{2}\trace(Q^{\dagger}Q)=\frac12\trace{(Q)} = k/2 \le N.
\end{align}
where in the last line we used the fact that $Q$ is rank-$k$ projector
and $k\le n=2N$.  Since this is true for any Slater determinant $\psi$, we arrive at
\begin{align}
\Lslater{h}\le N.\label{eq:slatersecond}
\end{align}
Combining Eqs.~(\ref{Lmax},\ref{eq:slatersecond}) proves the lemma.
\end{proof}

Given the limitations of Slater determinants  exposed by Lemma~\ref{lemma:slaterlimitation}, it is natural to consider approximation algorithms for more general classes of fermionic states.
A natural candidate is the class of fermionic Gaussian states.
The latter are defined most naturally in terms of Majorana fermion operators
$c_1,\ldots,c_{2n}$ such that 
\be
\label{Majoranas}
c_{2p-1}=a_p + a_p^\dag \quad \mbox{and} \quad c_{2p}=(-i)(a_p - a_p^\dag)
\ee
for $1\le p\le n$. They obey commutation rules $c_p^\dag= c_p$ and
$c_p c_q + c_q c_p = 2\delta_{p,q}I$.
For any orthogonal matrix  $R\in O(2n)$
let $U_R$ be a unitary operator acting on the Hilbert space
of $n$ fermi modes  such that
\be
\label{Rmatrix}
(U_R)^\dag \, c_p (U_R) = \sum_{q=1}^{2n} R_{p,q} c_q
\ee
for all $p=1,\ldots,2n$. Such unitary $U_R$ 
is uniquely determined by $R$ up to an overall phase~\cite{bravyi2004lagrangian}.
A state $\psi$ of $n$ fermi modes is called Gaussian if it has the form 
\be
|\psi\ra = U_R |0^n\ra
\ee
for some orthogonal matrix $R\in O(2n)$.
Slater determinants can be viewed as a subset of Gaussian states
such that the rotation Eq.~(\ref{Rmatrix}) does not mix creation and annihilation
operators. 
Given a fermionic Hamiltonian $h$, let 
\be
\label{LGauss}
\Lgauss{h}= \max_\psi \la \psi|h|\psi\ra,
\ee
where the maximization is over all Gaussian states $\psi$. Approximation algorithms based on Gaussian states 
as well as projections of Gaussian states onto a fixed particle 
number subspace have been previously used
as an extension of the Hartree-Fock method from quantum chemistry
\cite{bach1994generalized,tahara2008variational,kraus2010generalized,bravyi2017complexity}.
 We find the following:
\begin{lemma}\label{lemma:gaussrichardson}
Let $h$ be Richardson's Hamiltonian 
acting on $n\ge 8$ fermi modes.  Then
\be
\label{GaussLower}
\frac{\Lgauss{h}}{\Lmax{h}}\geq 1-\frac6n.
\ee
\end{lemma}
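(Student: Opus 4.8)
The plan is to prove the inequality by evaluating the energy of a single, explicitly chosen Gaussian trial state: since $\Lgauss{h}$ is a maximum over all Gaussian states, any feasible state furnishes a lower bound. Given the pairing structure $P=\sum_{j=1}^N a_{2j-1}a_{2j}$ of Richardson's Hamiltonian, the natural choice is the balanced BCS state
\[
|\psi\rangle = \prod_{j=1}^{N}\bigl(u + v\,a_{2j-1}^\dag a_{2j}^\dag\bigr)|0^n\rangle,\qquad u^2+v^2=1,
\]
where $N=n/2$. First I would record that this is indeed a Gaussian state: it is obtained from the vacuum by a Bogoliubov rotation that pairs modes $2j-1$ and $2j$, i.e.\ by a unitary $U_R$ of the form in Eq.~\eqref{Rmatrix} that mixes creation and annihilation operators (this is exactly the freedom that Slater determinants lack).

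Next I would compute $\langle\psi|h|\psi\rangle=\sum_{j,k=1}^{N}\langle\psi|b_j^\dag b_k|\psi\rangle$ with $b_j=a_{2j-1}a_{2j}$. The key observation is that each $b_j$ is an \emph{even} (quadratic) operator supported only on the two modes of pair $j$, so distinct $b_j$ mutually commute and, because $|\psi\rangle$ carries no correlations between different pairs, every expectation factorizes pair by pair. The diagonal terms give the pair occupation $\langle b_j^\dag b_j\rangle=v^2$, while for $j\neq k$ the expectation splits as $\langle b_j^\dag\rangle\langle b_k\rangle=(uv)^2$; equivalently, this is Wick's theorem for Gaussian states, now retaining the anomalous contractions that were forced to vanish in Lemma~\ref{lemma:slaterlimitation}. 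Summing yields
\[
\langle\psi|h|\psi\rangle = N v^2 + N(N-1)\,u^2 v^2.
\]

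Finally I would set $u=v=1/\sqrt2$, giving $\langle\psi|h|\psi\rangle = N(N+1)/4$. Invoking the exactly-solved value Eq.~\eqref{Lmax} together with $\epsilon_N\le 1$ gives $\Lmax{h}\le (N+1)^2/4$, so
\[
\frac{\Lgauss{h}}{\Lmax{h}} \ge \frac{N(N+1)/4}{(N+1)^2/4} = \frac{N}{N+1} = 1-\frac{1}{N+1}\ge 1-\frac{6}{n},
\]
using $n=2N$ and $1/(N+1)\le 3/N$. (One could instead optimize over $v^2$, whose maximizer sits near $1/2$, but the balanced state already clears the claimed ratio with room to spare in the regime $n\ge 8$.)

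The step I expect to require the most care is the sign bookkeeping in the off-diagonal contractions: I must verify that the pair-hopping amplitudes $\langle b_j^\dag b_k\rangle$ enter with a \emph{positive} sign, so that the $N(N-1)$ cross terms add coherently to the energy rather than cancel. This coherent enhancement --- of order $N^2$, and unavailable to Slater determinants, where $\langle a_p a_q\rangle=0$ kills all cross terms --- is precisely the superconducting pairing correlation responsible for the gap between Lemmas~\ref{lemma:slaterlimitation} and \ref{lemma:gaussrichardson}. Everything else, namely the factorization over pairs and the closing arithmetic, is routine.
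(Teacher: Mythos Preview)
Your proposal is correct and essentially coincides with the paper's proof: your BCS state with $u=v=1/\sqrt2$ is exactly the paper's ``paired'' Gaussian state (defined there via the Majorana stabilizer conditions $\beta_j\gamma_j|\psi\rangle=\alpha_j\delta_j|\psi\rangle=i|\psi\rangle$), and both computations yield the same energy $\langle\psi|h|\psi\rangle=N(N+1)/4$ leading to the identical closing arithmetic. The only difference is presentational --- you compute the energy via pair-factorization of $\langle b_j^\dag b_k\rangle$ in the creation/annihilation picture, whereas the paper expands $P$ in Majorana operators and reads off the expectations from the stabilizer relations --- but the trial state, the key lower bound, and the final inequality are the same.
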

From Eqs.~(\ref{SlaterUpper},\ref{GaussLower}) 
we infer  that 
approximation ratios achieved by Gaussian and Slater determinant states
for  Richardson's Hamiltonian
 approach $1$ and $0$ respectively in the limit $n\to \infty$.  Thus general Gaussian states can vastly outperform  Slater determinants as a variational ansatz, 
even if one considers  particle number preserving Hamiltonians. 
We note that similar conclusions have previously been reached 
by Bach, Lieb, and Solovej~\cite{bach1994generalized},
as well as Kraus and Cirac~\cite{kraus2010generalized}
 in the study of the fermionic Hubbard model
with attractive interactions.
\begin{proof}[\bf Proof of Lemma~\ref{lemma:gaussrichardson}]
We shall need the following well-known fact, see e.g. Ref.~\cite{bravyi2006universal}.
\begin{fact}
\label{fact:paired}
Consider any permutation $\sigma \in S_{2n}$. 
Then there is a unique (up to an overall phase) state $\psi$ satisfying
\[
c_{\sigma(2p-1)} c_{\sigma(2p)} |\psi\ra = i|\psi\ra, \qquad p=1,\ldots,n.
\]
The state $\psi$ is Gaussian for any permutation $\sigma$.
\end{fact}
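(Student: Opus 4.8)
The plan is to reduce everything to the identity permutation by conjugating with the Gaussian unitary $U_R$ that implements the relabeling $\sigma$. First I would record the single-mode identity $c_{2p-1}c_{2p}=iZ_p$, where $Z_p=a_pa_p^\dagger-a_p^\dagger a_p$ acts as Pauli $Z$ on the $p$-th qubit. Using $c_{2p-1}=a_p+a_p^\dagger$ and $c_{2p}=-i(a_p-a_p^\dagger)$ together with $a_p^2=(a_p^\dagger)^2=0$ and the canonical anticommutation relations, one computes $(a_p+a_p^\dagger)(a_p-a_p^\dagger)=-Z_p$, which gives $c_{2p-1}c_{2p}=iZ_p$. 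Since $Z_p|0^n\rangle=|0^n\rangle$ (the vacuum has every mode empty), the vacuum already satisfies $c_{2p-1}c_{2p}|0^n\rangle=i|0^n\rangle$ for all $p$, so Fact~\ref{fact:paired} holds for $\sigma=\mathrm{id}$ with $\psi=|0^n\rangle$.

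Next I would pass to a general permutation. A permutation matrix is orthogonal, so the matrix $R\in O(2n)$ with entries $R_{i,j}=\delta_{j,\sigma^{-1}(i)}$ lies in $O(2n)$, and by Eq.~\eqref{Rmatrix} there exists a unitary $U_R$ (unique up to phase) with $U_R^\dagger c_{\sigma(k)}U_R=c_k$ for every $k$. Conjugating the pairing operators then yields
\[
U_R^\dagger\, c_{\sigma(2p-1)}c_{\sigma(2p)}\, U_R = c_{2p-1}c_{2p}=iZ_p .
\]
Consequently the system $c_{\sigma(2p-1)}c_{\sigma(2p)}|\psi\rangle=i|\psi\rangle$ for all $p$ is equivalent, after setting $|\phi\rangle=U_R^\dagger|\psi\rangle$, to $Z_p|\phi\rangle=|\phi\rangle$ for all $p$. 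This single equivalence simultaneously settles existence, uniqueness, and Gaussianity.

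For existence and Gaussianity, $|\phi\rangle=|0^n\rangle$ solves $Z_p|\phi\rangle=|\phi\rangle$, hence $|\psi\rangle=U_R|0^n\rangle$ is a solution, and it is a Gaussian state by definition. For uniqueness I would invoke the elementary fact that the common $+1$ eigenspace of the mutually commuting operators $Z_1,\ldots,Z_n$ is one-dimensional: in the qubit identification it is spanned by $|0^n\rangle$. Therefore any solution $|\phi\rangle$ is a scalar multiple of $|0^n\rangle$, so $|\psi\rangle=U_R|\phi\rangle$ is determined up to an overall phase, as claimed.

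The only real care needed is bookkeeping, and this is where I expect the main (albeit modest) obstacle to lie: one must fix the orientation of $R$ so that conjugation sends $c_{\sigma(k)}\mapsto c_k$ rather than its inverse, and one must track the phase in $c_{2p-1}c_{2p}=iZ_p$ so that the eigenvalue $+i$ corresponds to the vacuum $Z_p=+1$ and not to the occupied state. Once the relabeling is absorbed into $U_R$, the whole statement collapses to the triviality that the vacuum is the unique joint $+1$ eigenvector of the parity operators $Z_p$, so I anticipate no substantive difficulty beyond getting these signs and the direction of $\sigma$ consistent.
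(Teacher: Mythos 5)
Your proof is correct. Note that the paper itself does not prove Fact~\ref{fact:paired}: it is stated as well known, with a citation to Ref.~\cite{bravyi2006universal}, so there is no in-paper argument to compare against; your reduction is the standard one and sits squarely inside the paper's own formalism, since the paper already asserts that $U_R$ exists (up to phase) for \emph{every} $R\in O(2n)$, which covers permutation matrices of determinant $-1$. Your sign bookkeeping also checks out: the Jordan--Wigner strings cancel in $c_{2p-1}c_{2p}=X_pY_p=iZ_p$, consistent with the convention $Z_j=a_ja_j^\dag-a_j^\dag a_j$ used in the paper, so the eigenvalue $+i$ corresponds to the empty mode, the joint $+1$ eigenspace of the commuting $Z_p$'s is spanned by $|0^n\rangle$, and $|\psi\rangle=U_R|0^n\rangle$ is Gaussian by the paper's definition, with uniqueness up to phase following immediately.
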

We claim that 
\be
\label{Lgauss_Lower}
\Lgauss{h} \ge \frac{N(N+1)}4.
\ee
Indeed,  rewrite $P$ in terms of Majorana operators 
\[
\alpha_j\equiv c_{4j-3}, \quad \beta_j\equiv c_{4j-2}, \quad \gamma_j\equiv c_{4j-1},\quad
\delta_j\equiv c_{4j}.
\]
Then $a_{2j-1}=(1/2)(\alpha_j-i\beta_j)$ and $a_{2j}=(1/2)(\gamma_j-i\delta_j)$ so that
\be
P=\frac{1}{4} \sum_{j=1}^{N}\left( \alpha_j \gamma_j-\beta_j\delta_j\right)
-i\left(\beta_j \gamma_j+\alpha_j\delta_j\right).
\label{eq:P}
\ee
Let $\psi$ be a ``paired'' state defined by 
\begin{equation}
\beta_j \gamma_j |\psi\rangle=i|\psi\rangle 
 \quad \text{and} \quad
\alpha_j \delta_j |\psi\rangle=i|\psi\rangle
\label{eq:stabs}
\end{equation}
for $1\leq j\leq N$. 
Note that Eq.~(\ref{eq:stabs}) contains $2N=n$ pairs of Majorana
operators and all pairs are disjoint. Thus Fact~\ref{fact:paired} implies that 
$\psi$ is  Gaussian.  
Using Eq.~\eqref{eq:P} gives
\begin{equation}
P|\psi\rangle=\frac{N}{2} |\psi\rangle+\frac{1}{4}\sum_{j=1}^{N}  \left(\alpha_j \gamma_j-\beta_j \delta_j\right)|\psi\rangle.
\label{eq:Pphi}
\end{equation}

From Eq.~\eqref{eq:stabs} we see that $\langle \psi |O|\psi\rangle=0$ for any operator $O$ which 
anticommutes with $\beta_j\gamma_j$ or $\alpha_j \delta_j$. Using this fact we obtain
\begin{align}
\langle \psi| \alpha_j \gamma_j \beta_k\delta_k|\psi\rangle&=\delta_{jk}\label{eq:s1}\\
\langle \psi| \alpha_j \gamma_j \alpha_k \gamma_k |\psi\rangle &=-\delta_{jk}\label{eq:s2}\\
\langle \psi|\beta_j \delta_j \beta_k \delta_k|\psi\rangle&=-\delta_{jk}\label{eq:s3}.
\end{align}
(Here $\delta_{j,k}$ denotes the Kronecker delta whereas $\delta_j$ denotes
a Majorana operator.)
Putting together Eqs.~(\ref{eq:Pphi}-\ref{eq:s3}) we arrive at
\begin{equation}
\Lgauss{h}\geq \langle \psi|h|\psi\rangle=\frac{N(N+1)}{4}.
\label{eq:gauss}
\end{equation}
Combining Eqs.~(\ref{Lmax},\ref{Lgauss_Lower}) proves the lemma.
\end{proof}

Next let us establish a lower bound  on the approximation ratio achieved by Gaussian
states for more general fermionic Hamiltonians that can 
be written in terms of quadratic and quartic Majorana operators:
\begin{align}
h&=h_1+h_2
\label{eq:h4}\\
h_1&=\sum_{p,q=1}^{2n} iV_{pq} c_pc_q \qquad h_2=\sum_{p,q,r,s=1}^{2n} W_{pqrs} c_p c_q c_r c_s.
\nonumber
\end{align}
Here $V_{p,q}$ and $W_{pqrs}$ are real coefficients. 
We shall assume that $V$ and $W$ are antisymmetric under a transposition of any
pair of indices. This guarantees that $h_1$ and $h_2$ are hermitian
and traceless. Otherwise, $V$ and $W$ can be completely arbitrary.
In particular, below we do not assume that $h$ is particle number preserving.
As before, the quadratic term $h_1$ is the fermionic analog of a $1$-local Hamiltonian. It is exactly solvable and all its eigenstates are Gaussian states~\cite{bravyi2004lagrangian}. 

By definition, 
this class of  Hamiltonians contains the models defined in Eq.~(\ref{fermiH12}).
Furthermore, it subsumes the
$2$-local qubit Hamiltonians considered in Section~\ref{sec:qubit}.
Indeed, suppose 
we are given a traceless $N$-qubit $2$-local Hamiltonian of the form Eq.~\eqref{eq:Htraceless}. 
Assume for simplicity that $N$ is even. 
Then we may  (efficiently) compute a Hamiltonian of the form Eq.~\eqref{eq:h4} with $n=3N$ fermi modes that has the same maximal eigenvalue $\lambda_{\max}(H)$ \cite{tsvelik1992new}. This proceeds by encoding each qubit $1\leq a\leq N$ using three Majorana modes $c_{3a-2}, c_{3a-1}, c_{3a}$ and representing the qubit Pauli operators as 
\[
X_{a}=i c_{3a-2}c_{3a-1} \quad Y_{a}=i c_{3a-1}c_{3a} \quad Z_{a}=i c_{3a-2}c_{3a}
\]
One can directly verify that they satisfy the correct Pauli commutation relations. Making this replacement for all Pauli operators in Eq.~\eqref{eq:Htraceless} we obtain a Hamiltonian of the form Eq.~\eqref{eq:h4}. One can show that this transformation preserves eigenvalues, while the degeneracy of each eigenvalue is increased by a factor of $2^{N/2}$ 
\footnote{Each encoded logical operator $X_{a}, Y_{a}, Z_{a}$ commutes with $b_a=ic_{3a-2}c_{3a-1}c_{3a}$ for each $1\leq a\leq N$ and $\{b_j,b_k\}=2\delta_{jk}$. The Hamiltonian acts on a Hilbert space $\cal{H}_{A}\otimes \cal{H}_B$, where $\cal{H}_A$ is the $N$-qubit system with logical operators $\{X_{a}, Y_{a}, Z_{a}: 1\leq a\leq N\}$, and $\cal{H}_B$ is a system of $N$ Majorana fermions $\{b_a: 1\leq a\leq N\}$. To complete the proof, note that the Hamiltonian acts trivially on $\cal{H}_B$, which has dimension $2^{n/2}$. }.

As in the case of $2$-local qubit Hamiltonians, the presence of the linear term $h_1$ in Eq.~\eqref{eq:h4} is a bit unwieldy and it suffices to consider Hamiltonians with quartic terms only. This (efficient) reduction  to the case $h_1=0$ proceeds using the following fermionic analogue of Lemma~\ref{lem:timereversal}. Given a Hamiltonian of the form Eq.~\eqref{eq:h4} we define a related Hamiltonian on $n+1$ fermi modes
\[
h^{\prime}=h_1\cdot\left(-i c_{2n+1}c_{2n+2}\right)+h_2
\]
Note that $h'$ contains only quartic Majorana operators.
\begin{lemma}
We have $\lambda_{\max}(h^{\prime})=\lambda_{\max}(h)$. Moreover, for any $(n+1)$-mode Gaussian 
or Slater determinant
state $\omega$ we may efficiently compute an $n$-mode Gaussian
or Slater determinant
 state $\phi$ such that $\langle \phi|h|\phi\rangle\geq \langle \omega| h^{\prime}|\omega\rangle$.
\label{lem:fermi_timereversal}
\end{lemma}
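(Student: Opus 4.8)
The plan is to mirror the proof of Lemma~\ref{lem:timereversal}, with the single-qubit operator $Z_{n+1}$ replaced by the occupation parity $\Gamma \equiv -i c_{2n+1}c_{2n+2}$ of the extra mode $n+1$, and with the qubit time-reversal map replaced by a fermionic particle-hole conjugation. First I would record two structural facts about $\Gamma$. Using $c_{2n+1}=a_{n+1}+a_{n+1}^\dag$ and $c_{2n+2}=-i(a_{n+1}-a_{n+1}^\dag)$ one checks that $\Gamma = I - 2 a_{n+1}^\dag a_{n+1}$, so $\Gamma$ is Hermitian with eigenvalues $\pm 1$ according to whether mode $n+1$ is empty or occupied, and $\Gamma^2=I$. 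Since $h_1,h_2$ involve only $c_1,\dots,c_{2n}$ and are of even degree, they commute with $\Gamma$; hence $[\Gamma,h']=0$ and the $(n+1)$-mode space splits as the orthogonal sum of the eigenspaces $\Gamma=\pm 1$, each canonically identified with the $n$-mode space. On these two blocks the operator $h'=h_1\Gamma+h_2$ restricts to $h_2+h_1=h$ and to $h_2-h_1$, respectively, so $\mathrm{spec}(h')=\mathrm{spec}(h)\cup\mathrm{spec}(h_2-h_1)$.

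Next I would introduce the antiunitary $\mathcal{T}=U_R K$, where $K$ is complex conjugation in the occupation basis and $R=\mathrm{diag}(+1,-1,+1,-1,\dots)\in O(2n)$ is chosen so that $\mathcal{T}$ fixes every Majorana operator, $\mathcal{T}c_p\mathcal{T}^{-1}=c_p$; equivalently $\mathcal{T}$ is the particle-hole conjugation $\mathcal{T}a_p\mathcal{T}^{-1}=a_p^\dag$. Acting on operators by $A\mapsto \mathcal{T}A\mathcal{T}^{-1}=U_R A^{*}U_R^\dag$, the map $\mathcal{T}$ conjugates scalar coefficients but leaves the Majorana monomials untouched, so it sends the purely imaginary quadratic coefficients $iV_{pq}$ to $-iV_{pq}$ while fixing the real quartic coefficients $W_{pqrs}$; therefore $\mathcal{T}h\mathcal{T}^{-1}=h_2-h_1$. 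Because complex conjugation and unitary conjugation both preserve the (real) spectrum of a Hermitian operator, $\lambda_{\max}(h_2-h_1)=\lambda_{\max}(h)$, and combined with the block decomposition this yields $\lambda_{\max}(h')=\lambda_{\max}(h)$.

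For the rounding statement, given an $(n+1)$-mode Gaussian (resp.\ Slater determinant) state $\omega$ I would measure $\Gamma$, i.e.\ project with $\Pi_\pm=(I\pm\Gamma)/2$. Since $h'$ commutes with $\Pi_\pm$, one has $\langle\omega|h'|\omega\rangle=p_+\langle\omega_+|h'|\omega_+\rangle+p_-\langle\omega_-|h'|\omega_-\rangle$ with $\omega_\pm=\Pi_\pm|\omega\rangle/\sqrt{p_\pm}$ and $p_\pm=\langle\omega|\Pi_\pm|\omega\rangle$, so at least one branch has $h'$-energy at least $\langle\omega|h'|\omega\rangle$. Measuring the occupation of a single mode maps a Gaussian (resp.\ Slater) state to a Gaussian (resp.\ Slater) state in which mode $n+1$ is disentangled, so $\omega_\pm=\phi'_\pm\otimes|z_\pm\rangle_{n+1}$ with $\phi'_\pm$ an $n$-mode Gaussian (resp.\ Slater) state and $z_+=0,\ z_-=1$. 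If the $+$ branch wins, then on $\Gamma=+1$ the operator $h'$ acts as $h$, so $\langle\omega_+|h'|\omega_+\rangle=\langle\phi'_+|h|\phi'_+\rangle$ and I output $\phi=\phi'_+$. If the $-$ branch wins, then $\langle\omega_-|h'|\omega_-\rangle=\langle\phi'_-|(h_2-h_1)|\phi'_-\rangle=\langle\mathcal{T}^{-1}\phi'_-|h|\mathcal{T}^{-1}\phi'_-\rangle$, using the identity $\langle\psi|\mathcal{T}A\mathcal{T}^{-1}|\psi\rangle=\langle\mathcal{T}^{-1}\psi|A|\mathcal{T}^{-1}\psi\rangle$ valid for antiunitary $\mathcal{T}$ and Hermitian $A$; here $\phi=\mathcal{T}^{-1}\phi'_-$ is again Gaussian (resp.\ a Slater determinant, now with particle number $k\mapsto n-k$) because particle-hole conjugation is a Bogoliubov transformation. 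All operations—solving for $\omega_\pm$, the projective measurement, and the fixed transformation $\mathcal{T}$—are polynomial-time on the covariance-matrix (resp.\ orbital) description of the states.

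The only genuinely fermion-specific input, and the step I would be most careful with, is the closure property invoked above: that projecting a pure Gaussian (resp.\ Slater) state onto a definite occupation of one mode yields a pure Gaussian (resp.\ Slater) state that factorizes off that mode, together with the efficiency of this update. This is standard in the Gaussian formalism but deserves an explicit statement, and I would also dispatch the degenerate cases $p_\pm=0$ separately (there $\omega$ already has definite parity and no projection is needed). Everything else is a line-by-line translation of Lemma~\ref{lem:timereversal}, with $\Gamma$ playing the role of $Z_{n+1}$ and the particle-hole antiunitary $\mathcal{T}$ playing the role of the qubit time-reversal map $M\mapsto(Y^{\otimes n}M Y^{\otimes n})^{T}$.
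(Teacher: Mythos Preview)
Your proof is correct and follows the same architecture as the paper: exploit that $\Gamma=-ic_{2n+1}c_{2n+2}$ commutes with $h'$ to block-diagonalize into $h_2\pm h_1$, relate the two blocks by a fermionic time-reversal, project $\omega$ onto a definite $\Gamma$-eigenspace, and undo the time-reversal if the $-$ branch wins. The paper packages the time-reversal as the composite ``conjugate by $U_{\mathrm{odd}}=c_1c_3\cdots c_{2n-1}$, then transpose,'' verifying $U_{\mathrm{odd}}^\dagger h\,U_{\mathrm{odd}}=(h_2-h_1)^T$ via the Jordan--Wigner identities $c_{2j-1}^T=c_{2j-1}$, $c_{2j}^T=-c_{2j}$; you package the same map as a single antiunitary $\mathcal{T}=U_R K$ that fixes every Majorana. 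These are equivalent descriptions (your $U_R$ and the paper's $U_{\mathrm{odd}}$ differ at most by the global parity), and the closure facts you flag --- that occupation-number projection and complex conjugation preserve the Gaussian and Slater classes --- are exactly the ones the paper invokes by citation at the end of its proof.
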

The proof of Lemma \ref{lem:fermi_timereversal},
provided in Appendix~\ref{app:fermi_timereversal},
is based on a fermionic analogue of the time reversal operation.
Finally, we establish a fermionic analogue of Theorem~\ref{thm:alg} for 
Gaussian states. 
\begin{theorem}
\label{thm:gauss_approx}
There is a classical algorithm which takes as input a Hamiltonian of the
form Eq.~\eqref{eq:h4} and outputs a Gaussian state $\psi$ such that 
\begin{equation}
\la \psi|h|\psi\ra \ge \frac{\Lgauss{h}}{O(\log{n})}
\label{eq:gaussianstate}
\end{equation}
and
\begin{equation}
\la \psi|h|\psi\ra \ge \frac{\Lmax{h}}{O(n\log{n})}
\end{equation}
with probability at least $2/3$.
The algorithm has runtime $poly(n)$. 
\end{theorem}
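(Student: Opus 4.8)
The plan is to parallel the proof of Theorem~\ref{thm:alg}, replacing the single-qubit Bloch ball by the manifold of Gaussian covariance matrices. By the fermionic time-reversal reduction of Lemma~\ref{lem:fermi_timereversal} one may assume $h_1=0$, so that $h=h_2$ is purely quartic. Two ingredients then suffice: an algorithm achieving Eq.~\eqref{eq:gaussianstate}, and a structural inequality $\Lgauss{h}\ge\Lmax{h}/O(n)$, from which the second bound follows by composition, $\langle\psi|h|\psi\rangle\ge\Lgauss{h}/O(\log n)\ge\Lmax{h}/O(n\log n)$.

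For Eq.~\eqref{eq:gaussianstate} I would recast $\Lgauss{h}$ as a quadratic optimization with orthogonality constraints. To a pure Gaussian state $\psi$ associate its covariance matrix $M_{pq}=-\tfrac{i}{2}\langle\psi|[c_p,c_q]|\psi\rangle$, which is real antisymmetric with $M^2=-I$; conversely every such $M$ arises from a Gaussian state. Wick's theorem expands each quartic expectation $\langle\psi|c_pc_qc_rc_s|\psi\rangle$ into products of two entries of $M$, so $\langle\psi|h_2|\psi\rangle=f(M)$ is a fixed quadratic form and
\[
\Lgauss{h}=\max\bigl\{\,f(M)\;:\;M\in O(2n),\ M^{T}=-M\,\bigr\}.
\]
This is an instance of Qp-Oc, to which the SDP relaxation and rounding of Ref.~\cite{so2009improved} (collected in Appendix~\ref{app:QPOC}) apply. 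The relaxation has optimum $\sdp\ge\Lgauss{h}$, and rounding produces a Gaussian state with expected energy $\Omega(1/\log n)\cdot\Lgauss{h}$. Since the output is itself Gaussian, $\langle\psi|h|\psi\rangle\le\Lgauss{h}$, and the Markov argument from the end of the proof of Theorem~\ref{thm:alg} boosts the success probability to $2/3$ after $O(\log n)$ repetitions, yielding Eq.~\eqref{eq:gaussianstate}.

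The structural inequality is the fermionic analogue of Eq.~\eqref{eq:1norm}, and obtaining it with the stated $O(n)$ loss is the crux. Its easy half is the operator-norm bound $\Lmax{h}\le\|W\|_1$ with $\|W\|_1=\sum_{pqrs}|W_{pqrs}|$, since each $c_pc_qc_rc_s$ has unit norm; it therefore suffices to exhibit a single Gaussian state of energy at least $\|W\|_1/O(n)$. Here I would adapt the H\aa stad-type argument behind Eq.~\eqref{eq:hm}: restricting to the paired Gaussian states of Fact~\ref{fact:paired} turns the energy into a classical quadratic form $\sum_{j<k}\beta_{jk}\epsilon_j\epsilon_k$ in the pairing signs $\epsilon\in\{\pm1\}^n$, whose maximum is at least a Frobenius-norm fraction of its coefficients by anticoncentration. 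The main obstacle is that this reduction, implemented naively, loses several powers of $n$ when passing from $W$ to the surviving sign-quadratic form, so recovering the full $1$-norm up to only $O(n)$ requires a sharper argument—for instance a careful choice of pairings, or a direct proof that the positive-semidefinite $2$-RDM relaxation $\mathcal{G}_{(pq),(rs)}=-\langle\psi|c_pc_qc_rc_s|\psi\rangle\succeq0$, which upper bounds $\Lmax{h}$, overshoots the Gaussian-admissible relaxation by at most a factor $O(n)$.
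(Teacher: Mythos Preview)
Your treatment of the first bound is essentially the paper's: cast $\Lgauss{h}$ as a Qp-Oc instance via Wick's theorem and invoke So's rounding (Lemma~\ref{lemma:So}). One small correction: the paper optimizes over \emph{mixed} Gaussian covariance matrices, so the constraint is $\|M\|\le 1$ on the full convex body of real antisymmetric matrices rather than the extremal set $M^2=-I$; this is what makes the problem literally an instance of Qp-Oc as formulated in Appendix~\ref{app:QPOC}, and the output is then decomposed into pure Gaussian states.

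The genuine gap is in the second bound. You aim for a structural inequality $\Lgauss{h}\ge\Lmax{h}/O(n)$ and then compose it with the algorithmic $O(\log n)$ loss. But you do not actually prove this inequality---you acknowledge the paired-state reduction ``loses several powers of $n$''---and the paper does not prove it either: the paper only obtains $\Lgauss{h}\ge\Lmax{h}/O(n\log n)$ as a \emph{byproduct} of the theorem, not as an input to it. So your decomposition into ``algorithmic part plus structural part'' is asking for more than is needed or known.

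The paper's trick bypasses the structural inequality. Instead of comparing $\Lmax{h}$ to $\Lgauss{h}$, it compares $\Lmax{h}$ directly to the SDP relaxation optimum $\theta^*(\calL,W)$. From a top eigenvector $\psi$ one forms
\[
\rho_{pq,rs}=-\tfrac{1}{2n}\,\epsilon_{p,q}\epsilon_{r,s}\,\mathrm{Re}\,\langle\psi|c_pc_qc_rc_s|\psi\rangle,
\]
checks that $\rho\ge 0$, that it is supported on the antisymmetric subspace $\calL$, and---this is where the factor $n$ enters---that the $1/(2n)$ rescaling forces both partial traces to be $\le I$. Hence $\rho$ is feasible for the SDP of Eq.~\eqref{QPOCrelax} with objective value $\Lmax{h}/(2n)$, giving $\theta^*\ge\Lmax{h}/(2n)$. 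Since So's rounding in fact achieves $\theta^*/O(\log n)$ (a stronger guarantee than merely $\Lgauss{h}/O(\log n)$), the second bound follows immediately. Your closing sentence about the 2-RDM is pointing at the right object, but the correct statement is not that it overshoots the Gaussian relaxation by $O(n)$; rather, dividing it by $2n$ makes it feasible for the SDP, and the SDP overshoots $\Lgauss{h}$ by only $O(\log n)$ (Lemma~\ref{lemma:So1}).
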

The proof is given in Appendices~\ref{app:SlaterProof},\ref{app:GaussLower}.
We leave as an open question whether Gaussian states achieve
a constant approximation ratio, that is, whether
\be
\label{conjecture}
\Lgauss{h} \ge C \Lmax{h}
\ee
for some universal constant $C>0$ 
and for all fermionic Hamiltonians $h$ of the form Eq.~(\ref{eq:h4}). If true, the conjecture Eq.~(\ref{conjecture}) would imply that the approximation algorithm of
Theorem~\ref{thm:gauss_approx} outputs a Gaussian state  $\psi$
with energy
$\la \psi |h|\psi\ra \ge \Lmax{h}/O(\log{n})$.
This would match the best known approximation algorithms
for classical quadratic functions and $2$-local qubit Hamiltonians,
see Sections~\ref{sec:intro},\ref{sec:qubit}.

\section{Acknowledgements}
The authors thank Boaz Barak, Sevag Gharibian, Aram Harrow, and 
Frank Verstraete for helpful discussions and comments.
 SB, DG and KT acknowledge support from the IBM Research
Frontiers Institute. RK is supported by the Technical University of Munich -- Institute for Advanced Study,
funded by the German Excellence Initiative and the
European Union Seventh Framework Programme under
grant agreement no.~291763.

%


\begin{thebibliography}{38}%
\makeatletter
\providecommand \@ifxundefined [1]{%
 \@ifx{#1\undefined}
}%
\providecommand \@ifnum [1]{%
 \ifnum #1\expandafter \@firstoftwo
 \else \expandafter \@secondoftwo
 \fi
}%
\providecommand \@ifx [1]{%
 \ifx #1\expandafter \@firstoftwo
 \else \expandafter \@secondoftwo
 \fi
}%
\providecommand \natexlab [1]{#1}%
\providecommand \enquote  [1]{``#1''}%
\providecommand \bibnamefont  [1]{#1}%
\providecommand \bibfnamefont [1]{#1}%
\providecommand \citenamefont [1]{#1}%
\providecommand \href@noop [0]{\@secondoftwo}%
\providecommand \href [0]{\begingroup \@sanitize@url \@href}%
\providecommand \@href[1]{\@@startlink{#1}\@@href}%
\providecommand \@@href[1]{\endgroup#1\@@endlink}%
\providecommand \@sanitize@url [0]{\catcode `\\12\catcode `\$12\catcode
  `\&12\catcode `\#12\catcode `\^12\catcode `\_12\catcode `\%12\relax}%
\providecommand \@@startlink[1]{}%
\providecommand \@@endlink[0]{}%
\providecommand \url  [0]{\begingroup\@sanitize@url \@url }%
\providecommand \@url [1]{\endgroup\@href {#1}{\urlprefix }}%
\providecommand \urlprefix  [0]{URL }%
\providecommand \Eprint [0]{\href }%
\providecommand \doibase [0]{http://dx.doi.org/}%
\providecommand \selectlanguage [0]{\@gobble}%
\providecommand \bibinfo  [0]{\@secondoftwo}%
\providecommand \bibfield  [0]{\@secondoftwo}%
\providecommand \translation [1]{[#1]}%
\providecommand \BibitemOpen [0]{}%
\providecommand \bibitemStop [0]{}%
\providecommand \bibitemNoStop [0]{.\EOS\space}%
\providecommand \EOS [0]{\spacefactor3000\relax}%
\providecommand \BibitemShut  [1]{\csname bibitem#1\endcsname}%
\let\auto@bib@innerbib\@empty
\bibitem [{\citenamefont {Kitaev}\ \emph {et~al.}(2002)\citenamefont {Kitaev},
  \citenamefont {Shen}, \citenamefont {Vyalyi},\ and\ \citenamefont
  {Vyalyi}}]{kitaev2002classical}%
  \BibitemOpen
  \bibfield  {author} {\bibinfo {author} {\bibfnamefont {A.}~\bibnamefont
  {Kitaev}}, \bibinfo {author} {\bibfnamefont {A.}~\bibnamefont {Shen}},
  \bibinfo {author} {\bibfnamefont {M.}~\bibnamefont {Vyalyi}}, \ and\ \bibinfo
  {author} {\bibfnamefont {M.}~\bibnamefont {Vyalyi}},\ }\href
  {https://books.google.com/books?id=qYHTvHPvmG8C} {\emph {\bibinfo {title}
  {Classical and Quantum Computation}}},\ Graduate studies in mathematics\
  (\bibinfo  {publisher} {American Mathematical Society},\ \bibinfo {year}
  {2002})\BibitemShut {NoStop}%
\bibitem [{\citenamefont {Kempe}\ \emph {et~al.}(2004)\citenamefont {Kempe},
  \citenamefont {Kitaev},\ and\ \citenamefont {Regev}}]{kempe2004complexity}%
  \BibitemOpen
  \bibfield  {author} {\bibinfo {author} {\bibfnamefont {J.}~\bibnamefont
  {Kempe}}, \bibinfo {author} {\bibfnamefont {A.}~\bibnamefont {Kitaev}}, \
  and\ \bibinfo {author} {\bibfnamefont {O.}~\bibnamefont {Regev}},\ }in\
  \href@noop {} {\emph {\bibinfo {booktitle} {International Conference on
  Foundations of Software Technology and Theoretical Computer Science}}}\
  (\bibinfo {organization} {Springer},\ \bibinfo {year} {2004})\ pp.\ \bibinfo
  {pages} {372--383}\BibitemShut {NoStop}%
\bibitem [{\citenamefont {Cubitt}\ and\ \citenamefont
  {Montanaro}(2016)}]{cubitt2016complexity}%
  \BibitemOpen
  \bibfield  {author} {\bibinfo {author} {\bibfnamefont {T.}~\bibnamefont
  {Cubitt}}\ and\ \bibinfo {author} {\bibfnamefont {A.}~\bibnamefont
  {Montanaro}},\ }\href@noop {} {\bibfield  {journal} {\bibinfo  {journal}
  {SIAM Journal on Computing}\ }\textbf {\bibinfo {volume} {45}},\ \bibinfo
  {pages} {268} (\bibinfo {year} {2016})}\BibitemShut {NoStop}%
\bibitem [{\citenamefont {Gharibian}\ \emph {et~al.}(2015)\citenamefont
  {Gharibian}, \citenamefont {Huang}, \citenamefont {Landau}, \citenamefont
  {Shin} \emph {et~al.}}]{gharibian2015quantum}%
  \BibitemOpen
  \bibfield  {author} {\bibinfo {author} {\bibfnamefont {S.}~\bibnamefont
  {Gharibian}}, \bibinfo {author} {\bibfnamefont {Y.}~\bibnamefont {Huang}},
  \bibinfo {author} {\bibfnamefont {Z.}~\bibnamefont {Landau}}, \bibinfo
  {author} {\bibfnamefont {S.~W.}\ \bibnamefont {Shin}},  \emph {et~al.},\
  }\href@noop {} {\bibfield  {journal} {\bibinfo  {journal} {Foundations and
  Trends{\textregistered} in Theoretical Computer Science}\ }\textbf {\bibinfo
  {volume} {10}},\ \bibinfo {pages} {159} (\bibinfo {year} {2015})}\BibitemShut
  {NoStop}%
\bibitem [{\citenamefont {Gharibian}\ and\ \citenamefont
  {Kempe}(2012)}]{gharibian2012approximation}%
  \BibitemOpen
  \bibfield  {author} {\bibinfo {author} {\bibfnamefont {S.}~\bibnamefont
  {Gharibian}}\ and\ \bibinfo {author} {\bibfnamefont {J.}~\bibnamefont
  {Kempe}},\ }\href@noop {} {\bibfield  {journal} {\bibinfo  {journal} {SIAM
  Journal on Computing}\ }\textbf {\bibinfo {volume} {41}},\ \bibinfo {pages}
  {1028} (\bibinfo {year} {2012})}\BibitemShut {NoStop}%
\bibitem [{\citenamefont {Aharonov}\ \emph {et~al.}(2013)\citenamefont
  {Aharonov}, \citenamefont {Arad},\ and\ \citenamefont
  {Vidick}}]{aharonov2013guest}%
  \BibitemOpen
  \bibfield  {author} {\bibinfo {author} {\bibfnamefont {D.}~\bibnamefont
  {Aharonov}}, \bibinfo {author} {\bibfnamefont {I.}~\bibnamefont {Arad}}, \
  and\ \bibinfo {author} {\bibfnamefont {T.}~\bibnamefont {Vidick}},\
  }\href@noop {} {\bibfield  {journal} {\bibinfo  {journal} {ACM SIGACT News}\
  }\textbf {\bibinfo {volume} {44}},\ \bibinfo {pages} {47} (\bibinfo {year}
  {2013})}\BibitemShut {NoStop}%
\bibitem [{\citenamefont {Brandao}\ and\ \citenamefont
  {Harrow}(2013)}]{brandao2013product}%
  \BibitemOpen
  \bibfield  {author} {\bibinfo {author} {\bibfnamefont {F.~G.}\ \bibnamefont
  {Brandao}}\ and\ \bibinfo {author} {\bibfnamefont {A.~W.}\ \bibnamefont
  {Harrow}},\ }in\ \href@noop {} {\emph {\bibinfo {booktitle} {Proceedings of
  the 45th annual ACM Symposium on Theory of Computing}}}\ (\bibinfo
  {organization} {ACM},\ \bibinfo {year} {2013})\ pp.\ \bibinfo {pages}
  {871--880}\BibitemShut {NoStop}%
\bibitem [{\citenamefont {Eldar}\ and\ \citenamefont
  {Harrow}(2015)}]{eldar2015local}%
  \BibitemOpen
  \bibfield  {author} {\bibinfo {author} {\bibfnamefont {L.}~\bibnamefont
  {Eldar}}\ and\ \bibinfo {author} {\bibfnamefont {A.~W.}\ \bibnamefont
  {Harrow}},\ }\href@noop {} {\bibfield  {journal} {\bibinfo  {journal} {arXiv
  preprint arXiv:1510.02082}\ } (\bibinfo {year} {2015})}\BibitemShut {NoStop}%
\bibitem [{\citenamefont {Nirkhe}\ \emph {et~al.}(2018)\citenamefont {Nirkhe},
  \citenamefont {Vazirani},\ and\ \citenamefont
  {Yuen}}]{nirkhe2018approximate}%
  \BibitemOpen
  \bibfield  {author} {\bibinfo {author} {\bibfnamefont {C.}~\bibnamefont
  {Nirkhe}}, \bibinfo {author} {\bibfnamefont {U.}~\bibnamefont {Vazirani}}, \
  and\ \bibinfo {author} {\bibfnamefont {H.}~\bibnamefont {Yuen}},\ }\href@noop
  {} {\bibfield  {journal} {\bibinfo  {journal} {arXiv preprint
  arXiv:1802.07419}\ } (\bibinfo {year} {2018})}\BibitemShut {NoStop}%
\bibitem [{\citenamefont {Goemans}\ and\ \citenamefont
  {Williamson}(1995)}]{goemans1995improved}%
  \BibitemOpen
  \bibfield  {author} {\bibinfo {author} {\bibfnamefont {M.~X.}\ \bibnamefont
  {Goemans}}\ and\ \bibinfo {author} {\bibfnamefont {D.~P.}\ \bibnamefont
  {Williamson}},\ }\href@noop {} {\bibfield  {journal} {\bibinfo  {journal}
  {Journal of the ACM (JACM)}\ }\textbf {\bibinfo {volume} {42}},\ \bibinfo
  {pages} {1115} (\bibinfo {year} {1995})}\BibitemShut {NoStop}%
\bibitem [{\citenamefont {Charikar}\ and\ \citenamefont
  {Wirth}(2004)}]{charikar2004maximizing}%
  \BibitemOpen
  \bibfield  {author} {\bibinfo {author} {\bibfnamefont {M.}~\bibnamefont
  {Charikar}}\ and\ \bibinfo {author} {\bibfnamefont {A.}~\bibnamefont
  {Wirth}},\ }in\ \href@noop {} {\emph {\bibinfo {booktitle} {Foundations of
  Computer Science, 2004. Proceedings. 45th Annual IEEE Symposium on}}}\
  (\bibinfo {organization} {IEEE},\ \bibinfo {year} {2004})\ pp.\ \bibinfo
  {pages} {54--60}\BibitemShut {NoStop}%
\bibitem [{\citenamefont {Nemirovski}(2007)}]{nemirovski2007sums}%
  \BibitemOpen
  \bibfield  {author} {\bibinfo {author} {\bibfnamefont {A.}~\bibnamefont
  {Nemirovski}},\ }\href@noop {} {\bibfield  {journal} {\bibinfo  {journal}
  {Mathematical programming}\ }\textbf {\bibinfo {volume} {109}},\ \bibinfo
  {pages} {283} (\bibinfo {year} {2007})}\BibitemShut {NoStop}%
\bibitem [{\citenamefont {So}(2009)}]{so2009improved}%
  \BibitemOpen
  \bibfield  {author} {\bibinfo {author} {\bibfnamefont {A.~M.-C.}\
  \bibnamefont {So}},\ }in\ \href@noop {} {\emph {\bibinfo {booktitle}
  {Proceedings of the twentieth annual ACM-SIAM symposium on Discrete
  algorithms}}}\ (\bibinfo {organization} {Society for Industrial and Applied
  Mathematics},\ \bibinfo {year} {2009})\ pp.\ \bibinfo {pages}
  {1201--1209}\BibitemShut {NoStop}%
\bibitem [{\citenamefont {So}(2011)}]{so2011moment}%
  \BibitemOpen
  \bibfield  {author} {\bibinfo {author} {\bibfnamefont {A.~M.-C.}\
  \bibnamefont {So}},\ }\href@noop {} {\bibfield  {journal} {\bibinfo
  {journal} {Mathematical programming}\ }\textbf {\bibinfo {volume} {130}},\
  \bibinfo {pages} {125} (\bibinfo {year} {2011})}\BibitemShut {NoStop}%
\bibitem [{\citenamefont {Arora}\ \emph {et~al.}(2005)\citenamefont {Arora},
  \citenamefont {Berger}, \citenamefont {Elad}, \citenamefont {Kindler},\ and\
  \citenamefont {Safra}}]{arora2005non}%
  \BibitemOpen
  \bibfield  {author} {\bibinfo {author} {\bibfnamefont {S.}~\bibnamefont
  {Arora}}, \bibinfo {author} {\bibfnamefont {E.}~\bibnamefont {Berger}},
  \bibinfo {author} {\bibfnamefont {H.}~\bibnamefont {Elad}}, \bibinfo {author}
  {\bibfnamefont {G.}~\bibnamefont {Kindler}}, \ and\ \bibinfo {author}
  {\bibfnamefont {M.}~\bibnamefont {Safra}},\ }in\ \href@noop {} {\emph
  {\bibinfo {booktitle} {Foundations of Computer Science, 2005. FOCS 2005. 46th
  Annual IEEE Symposium on}}}\ (\bibinfo {organization} {IEEE},\ \bibinfo
  {year} {2005})\ pp.\ \bibinfo {pages} {206--215}\BibitemShut {NoStop}%
\bibitem [{Note1()}]{Note1}%
  \BibitemOpen
  \bibinfo {note} {In particular, Ref.~\cite {arora2005non} shows that if an
  efficient algorithm achieves this approximation ratio then there exists an
  algorithm which solves any decision problem in $NP$ on input size $n$ using
  runtime $n^{\protect \mathrm {poly}(\protect \qopname \relax o{log}(n))}$.
  This is believed to be very unlikely.}\BibitemShut {Stop}%
\bibitem [{Note2()}]{Note2}%
  \BibitemOpen
  \bibinfo {note} {Given a function Eq.~\protect \textup {\hbox {\mathsurround
  \z@ \protect \normalfont (\ignorespaces \ref {eq:f}\unskip \@@italiccorr )}}
  we can add an auxiliary variable $y\in \protect \{\pm 1\protect \}$ and
  consider the function of $n+1$ variables $G(x,y)=x^T B x+y b^T x$ which only
  contains quadratic terms. It is then easily seen that the range of $G$ is
  equal to the range of $F$.}\BibitemShut {Stop}%
\bibitem [{Note3()}]{Note3}%
  \BibitemOpen
  \bibinfo {note} {We note that all our results apply also to the problem of
  minimizing the energy of $H$ and approximating the minimum eigenvalue
  $\lambda _{\protect \qopname \relax m{min}}(H)=-\lambda _{\protect \qopname
  \relax m{max}}(-H)$ which is more relevant in many-body physics. We opted to
  consider a maximization problem to avoid a proliferation of minus
  signs.}\BibitemShut {Stop}%
\bibitem [{\citenamefont {Harrow}\ and\ \citenamefont
  {Montanaro}(2017)}]{harrow2017extremal}%
  \BibitemOpen
  \bibfield  {author} {\bibinfo {author} {\bibfnamefont {A.~W.}\ \bibnamefont
  {Harrow}}\ and\ \bibinfo {author} {\bibfnamefont {A.}~\bibnamefont
  {Montanaro}},\ }\href@noop {} {\bibfield  {journal} {\bibinfo  {journal}
  {Quantum}\ }\textbf {\bibinfo {volume} {1}},\ \bibinfo {pages} {6} (\bibinfo
  {year} {2017})}\BibitemShut {NoStop}%
\bibitem [{\citenamefont {Lieb}(1973)}]{lieb1973classical}%
  \BibitemOpen
  \bibfield  {author} {\bibinfo {author} {\bibfnamefont {E.~H.}\ \bibnamefont
  {Lieb}},\ }\href@noop {} {\bibfield  {journal} {\bibinfo  {journal}
  {Communications in Mathematical Physics}\ }\textbf {\bibinfo {volume} {31}},\
  \bibinfo {pages} {327} (\bibinfo {year} {1973})}\BibitemShut {NoStop}%
\bibitem [{Note4()}]{Note4}%
  \BibitemOpen
  \bibinfo {note} {The statement of Theorem~\ref {thm:lieb} with $H_1=0$
  follows from Eq.~(1.1) of Ref.~\cite {lieb1973classical} by considering
  spin-$1/2$ particles and taking the zero temperature limit.}\BibitemShut
  {Stop}%
\bibitem [{Note5()}]{Note5}%
  \BibitemOpen
  \bibinfo {note} {The fact that the product states in Theorems \ref {thm:alg}
  and \ref {thm:lieb} can be taken to have this special form is a consequence
  of the fact that the six eigenstates of single-qubit Pauli operators
  $\protect \{ X,Y,Z\protect \}$ form a $2$-design \cite {dankert2009exact}.
  Any other single-qubit $2$-design could alternatively be used in its place,
  such as for example the $4$-state one which is used for similar purposes in
  Ref. \cite {harrow2017extremal}}\BibitemShut {NoStop}%
\bibitem [{\citenamefont {Hastad}(2015)}]{hastad2015improved}%
  \BibitemOpen
  \bibfield  {author} {\bibinfo {author} {\bibfnamefont {J.}~\bibnamefont
  {Hastad}},\ }\href@noop {} {\enquote {\bibinfo {title} {Improved bounds for
  bounded occurrence constraint satisfaction},}\ } (\bibinfo {year}
  {2015})\BibitemShut {NoStop}%
\bibitem [{\citenamefont {Horodecki}\ \emph {et~al.}(2003)\citenamefont
  {Horodecki}, \citenamefont {Shor},\ and\ \citenamefont
  {Ruskai}}]{horodecki2003entanglement}%
  \BibitemOpen
  \bibfield  {author} {\bibinfo {author} {\bibfnamefont {M.}~\bibnamefont
  {Horodecki}}, \bibinfo {author} {\bibfnamefont {P.~W.}\ \bibnamefont {Shor}},
  \ and\ \bibinfo {author} {\bibfnamefont {M.~B.}\ \bibnamefont {Ruskai}},\
  }\href@noop {} {\bibfield  {journal} {\bibinfo  {journal} {Reviews in
  Mathematical Physics}\ }\textbf {\bibinfo {volume} {15}},\ \bibinfo {pages}
  {629} (\bibinfo {year} {2003})}\BibitemShut {NoStop}%
\bibitem [{\citenamefont {Ruskai}(2003)}]{ruskai2003qubit}%
  \BibitemOpen
  \bibfield  {author} {\bibinfo {author} {\bibfnamefont {M.~B.}\ \bibnamefont
  {Ruskai}},\ }\href@noop {} {\bibfield  {journal} {\bibinfo  {journal}
  {Reviews in Mathematical Physics}\ }\textbf {\bibinfo {volume} {15}},\
  \bibinfo {pages} {643} (\bibinfo {year} {2003})}\BibitemShut {NoStop}%
\bibitem [{\citenamefont {Terhal}\ and\ \citenamefont
  {DiVincenzo}(2002)}]{terhal2002classical}%
  \BibitemOpen
  \bibfield  {author} {\bibinfo {author} {\bibfnamefont {B.~M.}\ \bibnamefont
  {Terhal}}\ and\ \bibinfo {author} {\bibfnamefont {D.~P.}\ \bibnamefont
  {DiVincenzo}},\ }\href@noop {} {\bibfield  {journal} {\bibinfo  {journal}
  {Physical Review A}\ }\textbf {\bibinfo {volume} {65}},\ \bibinfo {pages}
  {032325} (\bibinfo {year} {2002})}\BibitemShut {NoStop}%
\bibitem [{\citenamefont {Richardson}(1965)}]{richardson1965exact}%
  \BibitemOpen
  \bibfield  {author} {\bibinfo {author} {\bibfnamefont {R.}~\bibnamefont
  {Richardson}},\ }\href@noop {} {\bibfield  {journal} {\bibinfo  {journal}
  {Journal of Mathematical Physics}\ }\textbf {\bibinfo {volume} {6}},\
  \bibinfo {pages} {1034} (\bibinfo {year} {1965})}\BibitemShut {NoStop}%
\bibitem [{\citenamefont {Dukelsky}\ \emph {et~al.}(2004)\citenamefont
  {Dukelsky}, \citenamefont {Pittel},\ and\ \citenamefont
  {Sierra}}]{richardson}%
  \BibitemOpen
  \bibfield  {author} {\bibinfo {author} {\bibfnamefont {J.}~\bibnamefont
  {Dukelsky}}, \bibinfo {author} {\bibfnamefont {S.}~\bibnamefont {Pittel}}, \
  and\ \bibinfo {author} {\bibfnamefont {G.}~\bibnamefont {Sierra}},\
  }\href@noop {} {\bibfield  {journal} {\bibinfo  {journal} {Reviews of Modern
  Physics}\ }\textbf {\bibinfo {volume} {76}},\ \bibinfo {pages} {643}
  (\bibinfo {year} {2004})}\BibitemShut {NoStop}%
\bibitem [{Note6()}]{Note6}%
  \BibitemOpen
  \bibinfo {note} {Recall that the sign of $h$ has to be flipped if one is
  interested in the minimum rather than maximum eigenvalue.}\BibitemShut
  {Stop}%
\bibitem [{\citenamefont {Bravyi}(2005)}]{bravyi2004lagrangian}%
  \BibitemOpen
  \bibfield  {author} {\bibinfo {author} {\bibfnamefont {S.}~\bibnamefont
  {Bravyi}},\ }\href@noop {} {\bibfield  {journal} {\bibinfo  {journal} {Quant.
  Inf. and Comp.}\ }\textbf {\bibinfo {volume} {5}},\ \bibinfo {pages} {216}
  (\bibinfo {year} {2005})}\BibitemShut {NoStop}%
\bibitem [{\citenamefont {Bach}\ \emph {et~al.}(1994)\citenamefont {Bach},
  \citenamefont {Lieb},\ and\ \citenamefont {Solovej}}]{bach1994generalized}%
  \BibitemOpen
  \bibfield  {author} {\bibinfo {author} {\bibfnamefont {V.}~\bibnamefont
  {Bach}}, \bibinfo {author} {\bibfnamefont {E.~H.}\ \bibnamefont {Lieb}}, \
  and\ \bibinfo {author} {\bibfnamefont {J.~P.}\ \bibnamefont {Solovej}},\
  }\href@noop {} {\bibfield  {journal} {\bibinfo  {journal} {Journal of
  Statistical Physics}\ }\textbf {\bibinfo {volume} {76}},\ \bibinfo {pages}
  {3} (\bibinfo {year} {1994})}\BibitemShut {NoStop}%
\bibitem [{\citenamefont {Tahara}\ and\ \citenamefont
  {Imada}(2008)}]{tahara2008variational}%
  \BibitemOpen
  \bibfield  {author} {\bibinfo {author} {\bibfnamefont {D.}~\bibnamefont
  {Tahara}}\ and\ \bibinfo {author} {\bibfnamefont {M.}~\bibnamefont {Imada}},\
  }\href@noop {} {\bibfield  {journal} {\bibinfo  {journal} {Journal of the
  Physical Society of Japan}\ }\textbf {\bibinfo {volume} {77}},\ \bibinfo
  {pages} {114701} (\bibinfo {year} {2008})}\BibitemShut {NoStop}%
\bibitem [{\citenamefont {Kraus}\ and\ \citenamefont
  {Cirac}(2010)}]{kraus2010generalized}%
  \BibitemOpen
  \bibfield  {author} {\bibinfo {author} {\bibfnamefont {C.~V.}\ \bibnamefont
  {Kraus}}\ and\ \bibinfo {author} {\bibfnamefont {J.~I.}\ \bibnamefont
  {Cirac}},\ }\href@noop {} {\bibfield  {journal} {\bibinfo  {journal} {New
  Journal of Physics}\ }\textbf {\bibinfo {volume} {12}},\ \bibinfo {pages}
  {113004} (\bibinfo {year} {2010})}\BibitemShut {NoStop}%
\bibitem [{\citenamefont {Bravyi}\ and\ \citenamefont
  {Gosset}(2017)}]{bravyi2017complexity}%
  \BibitemOpen
  \bibfield  {author} {\bibinfo {author} {\bibfnamefont {S.}~\bibnamefont
  {Bravyi}}\ and\ \bibinfo {author} {\bibfnamefont {D.}~\bibnamefont
  {Gosset}},\ }\href@noop {} {\bibfield  {journal} {\bibinfo  {journal}
  {Communications in Mathematical Physics}\ }\textbf {\bibinfo {volume}
  {356}},\ \bibinfo {pages} {451} (\bibinfo {year} {2017})}\BibitemShut
  {NoStop}%
\bibitem [{\citenamefont {Bravyi}(2006)}]{bravyi2006universal}%
  \BibitemOpen
  \bibfield  {author} {\bibinfo {author} {\bibfnamefont {S.}~\bibnamefont
  {Bravyi}},\ }\href@noop {} {\bibfield  {journal} {\bibinfo  {journal}
  {Physical Review A}\ }\textbf {\bibinfo {volume} {73}},\ \bibinfo {pages}
  {042313} (\bibinfo {year} {2006})}\BibitemShut {NoStop}%
\bibitem [{\citenamefont {Tsvelik}(1992)}]{tsvelik1992new}%
  \BibitemOpen
  \bibfield  {author} {\bibinfo {author} {\bibfnamefont {A.}~\bibnamefont
  {Tsvelik}},\ }\href@noop {} {\bibfield  {journal} {\bibinfo  {journal}
  {Physical Review Letters}\ }\textbf {\bibinfo {volume} {69}},\ \bibinfo
  {pages} {2142} (\bibinfo {year} {1992})}\BibitemShut {NoStop}%
\bibitem [{Note7()}]{Note7}%
  \BibitemOpen
  \bibinfo {note} {Each encoded logical operator $X_{a}, Y_{a}, Z_{a}$ commutes
  with $b_a=ic_{3a-2}c_{3a-1}c_{3a}$ for each $1\leq a\leq N$ and $\protect
  \{b_j,b_k\protect \}=2\delta _{jk}$. The Hamiltonian acts on a Hilbert space
  $\protect \cal {H}_{A}\otimes \protect \cal {H}_B$, where $\protect \cal
  {H}_A$ is the $N$-qubit system with logical operators $\protect \{X_{a},
  Y_{a}, Z_{a}: 1\leq a\leq N\protect \}$, and $\protect \cal {H}_B$ is a
  system of $N$ Majorana fermions $\protect \{b_a: 1\leq a\leq N\protect \}$.
  To complete the proof, note that the Hamiltonian acts trivially on $\protect
  \cal {H}_B$, which has dimension $2^{n/2}$.}\BibitemShut {Stop}%
\bibitem [{\citenamefont {Dankert}\ \emph {et~al.}(2009)\citenamefont
  {Dankert}, \citenamefont {Cleve}, \citenamefont {Emerson},\ and\
  \citenamefont {Livine}}]{dankert2009exact}%
  \BibitemOpen
  \bibfield  {author} {\bibinfo {author} {\bibfnamefont {C.}~\bibnamefont
  {Dankert}}, \bibinfo {author} {\bibfnamefont {R.}~\bibnamefont {Cleve}},
  \bibinfo {author} {\bibfnamefont {J.}~\bibnamefont {Emerson}}, \ and\
  \bibinfo {author} {\bibfnamefont {E.}~\bibnamefont {Livine}},\ }\href@noop {}
  {\bibfield  {journal} {\bibinfo  {journal} {Physical Review A}\ }\textbf
  {\bibinfo {volume} {80}},\ \bibinfo {pages} {012304} (\bibinfo {year}
  {2009})}\BibitemShut {NoStop}%
\end{thebibliography}

\onecolumngrid

\appendix

\section{Lower bound on the maximum eigenvalue of $2$-local Hamiltonians}
\label{app:LmaxLower}

In this section we consider qubit Hamiltonians $H_2=\sum_{i,j=1}^{3n} C_{i,j} P_i P_j$
that contain only weight-two Pauli operators. 
\begin{lemma} [\cite{harrow2017extremal}]
\[
\lambda_{\max}(H_2)\geq \frac{\sum_{i,j=1}^{3n} |C_{i,j}|}{9n}
\]
\label{lem:weakbound}
\end{lemma}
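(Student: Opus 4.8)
The plan is to lower-bound $\lambda_{\max}(H_2)$ by the energy of a state supported on a single, suitably heavy \emph{matching} of the qubit interaction graph, chosen so that it captures a $1/n$ fraction of the total coefficient mass. For each pair of distinct qubits $a\neq b$ I would collect all interactions between them into the $3\times 3$ block $G^{(ab)}_{\mu\nu}=2C_{(a,\mu),(b,\nu)}$ (the factor $2$ absorbing the symmetric partner $C_{(b,\nu),(a,\mu)}$), and let $H^{(ab)}=\sum_{\mu,\nu}G^{(ab)}_{\mu\nu}P_{3a-3+\mu}P_{3b-3+\nu}$ be the associated two-qubit operator. By conjugating with single-qubit unitaries that realize the singular-value rotations of $G^{(ab)}$ (absorbing a sign into one entry to stay in $SO(3)$), $H^{(ab)}$ is brought to $g_1 XX+g_2 YY+g_3 ZZ$, where $|g_\mu|$ are the singular values. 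Its four eigenvalues are $\sum_\mu \varepsilon_\mu g_\mu$ over sign patterns with $\prod_\mu\varepsilon_\mu=-1$, each attained on a Bell state. Optimizing this pattern (flip only the smallest $|g_\mu|$) gives
\begin{align}
\lambda_{\max}(H^{(ab)}) &\ge \tfrac13\sum_\mu |g_\mu| =\tfrac13\|G^{(ab)}\|_{\ast} \nonumber\\
&\ge \tfrac13\|G^{(ab)}\|_F\ge \tfrac19\sum_{\mu,\nu}|G^{(ab)}_{\mu\nu}| =: \tfrac19\,\omega_{ab},\nonumber
\end{align}
using nuclear $\ge$ Frobenius and Cauchy--Schwarz over the nine entries. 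The point I would stress is that the optimal eigenvector here is \emph{maximally entangled}, so its single-qubit marginals are maximally mixed.

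Next I would fix a matching $M$ of the interaction graph and take $\rho=\bigotimes_{(a,b)\in M}\rho_{ab}\otimes\bigotimes_{c\ \mathrm{unmatched}}(I/2)$, where $\rho_{ab}$ is the top eigenstate of $H^{(ab)}$. Any term $C_{ij}P_iP_j$ whose two qubits do \emph{not} form a matched pair has at least one factor evaluated on a maximally mixed marginal—either an unmatched qubit, or a matched qubit whose partner lies in a different pair—and therefore contributes zero. Only the within-pair terms survive, so
\[
\lambda_{\max}(H_2)\ge \mathrm{Tr}(\rho H_2)=\sum_{(a,b)\in M}\lambda_{\max}(H^{(ab)})\ge \tfrac19\sum_{(a,b)\in M}\omega_{ab}.
\]

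Finally I would extract a heavy matching. Since $\sum_{a<b}\omega_{ab}=\sum_{i,j}|C_{i,j}|$, and the interaction graph on $n$ vertices has maximum degree at most $n-1$, Vizing's theorem partitions its edges into at most $n$ matchings; averaging, the heaviest one has weight at least $\tfrac1n\sum_{a<b}\omega_{ab}$. Combining with the previous display yields $\lambda_{\max}(H_2)\ge \tfrac{1}{9n}\sum_{i,j}|C_{i,j}|$, exactly as claimed. The main obstacle, and the origin of the unavoidable factor $n$, is precisely this passage from individual pairs to a global state: a star-shaped interaction graph shows that no single matching can hold more than a $1/n$ fraction of the total weight, so the edge-coloring step is essentially tight. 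The other decisive observation is that the per-pair optimizer is maximally entangled, which is what decouples distinct matched pairs and allows their energies to add without interference.
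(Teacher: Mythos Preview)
Your proof is correct, but it takes a genuinely different route from the paper's. The paper works with \emph{product} states throughout: it builds a multigraph with nine parallel edges between every pair of qubits (one per choice of $P_i\otimes P_j$), picks a uniformly random perfect matching $M$ on this multigraph, and for each matched edge $e=(i,j)$ sets the two single-qubit states to random Pauli eigenstates correlated so that $\EE_\phi\langle\phi|C_eP_e|\phi\rangle=|C_e|$; unmatched edges average to zero because the single-qubit marginals are uniform. Averaging over $M$ then uses $\Pr[e\in M]=1/(9(n-1))$ to get the bound (with the slightly better denominator $9(n-1)$). By contrast, you handle all nine Pauli terms on a pair at once via the SVD of $G^{(ab)}$, use the optimal \emph{entangled} Bell eigenvector on each matched pair, and select a heavy matching deterministically through Vizing's theorem. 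Both routes arrive at the same $1/9$ per-pair factor and the same $1/n$ matching loss; the paper's argument has the advantage of certifying the bound with a separable state (which is the content of the original Harrow--Montanaro statement Eq.~\eqref{eq:hm}), whereas your use of Bell states only yields the $\lambda_{\max}$ version actually stated in the lemma.
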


\begin{proof}
Define a graph $G=(V,E)$ where $V=[n]$ and  $E$
contains nine edges connecting each pair of vertices $a\ne b\in V$.
One should think of vertices and edges of $G$ as qubits and 
two-qubit Pauli operators respectively. 
Then there is a two-to-one correspondence between
the Pauli terms of the Hamiltonian  $H_2$ and 
the edges of $G$ (for example, if $P_i=Z_1$ and $P_j=Z_2$,
we would count $P_iP_j$ and $P_j P_i$ as two different terms of $H_2$
whereas they are represented by the same edge of $G$).
 Thus one can write
\begin{equation}
\label{HH}
H_2=\sum_{i,j=1}^{3n}C_{i,j} P_i P_j=2\sum_{e\in E} C_e P_e,
\end{equation}
where $C_e\equiv C_{i,j}$ and $P_e\equiv P_i P_j$.

Assume for simplicity that $n$ is even. 
Let $\calM$ be the set of perfect matchings on $G$.
We claim that 
for any fixed $M\in \calM$ one can define a random  $n$-qubit state
$|\phi\rangle=|\phi_1\otimes \cdots \otimes \phi_n\rangle$ such that 
\begin{equation}
\label{match2}
\EE_\phi \langle \phi | C_e P_e |\phi\rangle = 
\left\{ \ba{rcl}
|C_e| &\mbox{if} & e\in M,\\
0 && \mbox{otherwise}\\
\ea
\right.
\end{equation}
Indeed, consider some fixed edge $e\in M$ and let
$a<b$ be the qubits connected by $e$.
Write $P_e=Q_a Q_b$, where $Q_a,Q_b\in \{X,Y,Z\}$.
Choose single-qubit Clifford gates $U_a,U_b$ such that 
\[
Q_a=U_a Z_a U_a^\dag \quad \mbox{and} \quad Q_b=U_b Z_b U_b^\dag.
\]
Let $x\in \{0,1\}$ be a random uniformly distributed  bit.  Set
\[
|\phi_a\rangle =U_a|x\rangle \quad \mbox{and} \quad |\phi_b\rangle = U_b|x\oplus y\rangle
\]
where $y=0$ if $C_e\ge 0$ and $y=1$ if $C_e<0$. 
Then 
\begin{equation}
\label{phi}
\EE_x \langle \phi_a \otimes \phi_b | Q_aQ_b|\phi_a\otimes \phi_b\rangle = \mathrm{sgn}(C_e)
\quad \mbox{and} \quad
\quad \EE_x |\phi_a\rangle\langle \phi_a| =  \EE_x |\phi_b\rangle\langle \phi_b|=\frac{I}2.
\end{equation}
Let us define random single-qubit states $\phi_a,\phi_b$ as above independently for every edge $e\in M$. 
Choosing $\phi$ as a tensor product of all $\phi_a$ and using 
Eq.~(\ref{phi}) one easily gets Eq.~(\ref{match2}).

Suppose now that $M\in \calM$ is picked at random from the uniform distribution. 
Let $\phi^M\equiv \phi$ be a random product state satisfying Eq.~(\ref{match2}). Then
\begin{equation}
\label{eta}
\EE_{M} \EE_{\phi^M}  \langle \phi^M | H_2|\phi^M\rangle = 2\sum_{e\in E} \mathrm{Pr}(e\in M) |C_e| =
\frac{2}{9(n-1)} \sum_{e\in E} |C_e| = \frac1{9(n-1)} \sum_{i,j=1}^{3n} |C_{i,j}|,
\end{equation}
where we noted that $\mathrm{Pr}(e\in M)=1/(9(n-1))$. 
This establishes the existence of a product state which achieves energy at least the right-hand-side.
\end{proof}

\section{Stronger version of Lieb's theorem}
\label{app:Lieb}

In this section we prove 
a lower bound $\|H\|_{\mathrm{sep}}/\|H\| \ge 1/6$, see Section~\ref{sec:qubit}.
\begin{proof}
In light of Lemma \ref{lem:timereversal} it suffices to consider the case $H_1=0$.

Let $\rho$ satisfy $\trace{\left(\rho H\right)}=\lambda_{\max}(H)$. Let $S\subset [n]$ be a uniformly random subset of size $|S|=\alpha n$ and let $T=[n]\setminus S$. Here $\alpha$ is a constant we will fix later.  Define
\[
\rho(S)=\left(\mathcal{E}_{1/3}^S\otimes I_T\right)(\rho)
\]
In other words $\rho(S)$ is obtained from $\rho$ by applying the entanglement-breaking depolarizing channel $\mathcal{E}_{1/3}$ to all qubits in $S$.  Write the Hamiltonian as
\[
H=H_S+H_T+H_{ST}
\]
where
\[
H_S=\sum_{i,j\in S} C_{ij} P_iP_j \qquad H_T=\sum_{i,j\in T} C_{ij} P_iP_j \qquad H_{ST}=H-H_S-H_T.
\]
Now we have
\begin{equation}
\mathbb{E}_S \left[\trace{\left(\rho(S) H_S\right)}\right]=\mathbb{E}_S \left[\trace{\left(\rho \cdot \frac{1}{9}\sum_{i,j\in S} C_{ij} P_iP_j\right)}\right]=\frac{\alpha^2 \lambda_{\max}(H)}{9}
\label{eq:ex1}
\end{equation}
where we used the fact that $\mathrm{Pr}\left[i\in S \text{ and } j\in S\right]=\alpha^2$ for any $i\neq j$. Similarly,
\begin{equation}
\mathbb{E}_S \left[\trace{\left(\rho(S) H_{ST}\right)}\right]=\frac{2\alpha(1-\alpha) \lambda_{\max}(H)}{3}.
\label{eq:ex2}
\end{equation}

Since $\mathcal{E}_{1/3}$ is entanglement-breaking we may write
\begin{equation}
\rho(S)=\sum_{k} p(k)\sigma^{k}_{S}\otimes \epsilon^{k}_{T}  \qquad \quad 
\label{eq:rhos}
\end{equation}
where for each index $k$, $\sigma^{k}_S$ is a product state of the qubits in $S$, and $p(k)$ is a probability distribution. Here $\epsilon^{k}_T$ is some state of the qubits in $T$ which is in general not separable. For each $k$ and $j\in T$ we denote the single-qubit marginals of $\epsilon^k$ as
\begin{equation}
\omega_{j}^{k} =\trace{_{T\setminus \{j\}}}(\epsilon^k) \qquad \quad j\in T.
\end{equation}
Since it is a one-qubit density matrix, we may write $\omega_{j}^{k}$ as 
\[
\omega_{j}^{k}=1/2(I+aX+bY+cZ).
\]
For each $j,k$ we extend the above to a one-parameter family
\[
\omega_{j}^{k} (t)=\omega_{j}^{k}=1/2(I+taX+tbY+tcZ) \qquad -1\leq t\leq 1.
\]
Finally, we define a family of separable states $\Gamma(S,t)$ which are obtained by replacing the state $\epsilon^k$ in Eq.~\eqref{eq:rhos} as follows
\[
\Gamma(S,t)=\sum_{k} p(k)\sigma^{k}_{S}\otimes \bigotimes_{j\in T} \omega_j^{k}(t).
\]
Note that 
\[
\mathbb{E}_S\left[\trace{\left(\Gamma(S,t)\cdot (H_S+H_{ST})\right)}=\trace{\left(\rho(S)\cdot (H_S+tH_{ST})\right)}\right]=\left(\frac{\alpha^2}{9}+t\frac{2\alpha(1-\alpha)}{3}\right)\lambda_{\max}(H)
\]
and we may write
\[
\mathbb{E}_S\left[\trace{\left(\Gamma(S,t)\cdot H_T\right)}\right]=t^2 F(\alpha) 
\]
where $F(\alpha) \in \mathbb{R}$ is a function of $\alpha$ defined by the left-hand side with $t=1$. We now fix $\alpha=1/2$. Putting together the above and writing $F=F(1/2)$ gives
\begin{equation}
\mathbb{E}_S\left[\trace{\left(\Gamma(S,t)H\right)} \right]=\left(\frac{1}{36}+\frac{t}{6}\right)\lambda_{\max}(H)+t^2F.
\label{eq:gamma}
\end{equation}

We now consider two cases depending on the value of $F$:
\newline
\newline
\textbf{Case 1: $F\geq -\frac{1}{36}\lambda_{\max}(H) $}
\newline
Choosing $t=1$ in Eq.~\eqref{eq:gamma} gives
\[
\lambda_{\mathrm{sep}}(H)\geq \mathbb{E}_S\left[\trace{\left(\Gamma(S,t=1)H\right)} \right]\geq \left(\frac{1}{36}+\frac{1}{6}-\frac{1}{36}\right)\lambda_{\max}(H)=\frac{1}{6}\lambda_{\max}(H).
\]
\newline
\newline
\textbf{Case 2: $F\leq -\frac{1}{36}\lambda_{\max}(H) $}
\newline
Plugging in $t=-1$ and multiplying Eq.~\eqref{eq:gamma} by $-1$ gives
\[
\lambda_{\mathrm{sep}}(-H)\geq -\mathbb{E}_S\left[\trace{\left(\Gamma(S,t=-1)H\right)} \right]\geq \left(\frac{1}{36}t^2-\frac{1}{36}-\frac{1}{6}t\right)_{t=-1}\lambda_{\max}(H) =\frac{1}{6}\lambda_{\max}(H).
\]
\newline
\newline
We have shown that either $\lambda_{\mathrm{sep}}(-H)\geq \frac{1}{6}\lambda_{\max}(H)$ or  $\lambda_{\mathrm{sep}}(H)\geq \frac{1}{6}\lambda_{\max}(H)$. This establishes that 
\begin{equation}
\|H\|_{\mathrm{sep}}\geq \frac{1}{6}\lambda_{\max}(H).
\label{eq:hnorm1}
\end{equation}
Applying Eq.~\eqref{eq:hnorm1} to $H$ and $-H$ we arrive at the statement of the theorem.
\end{proof}

\section{Fermionic time reversal operation}
\label{app:fermi_timereversal}

In this section we prove Lemma \ref{lem:fermi_timereversal}.
\begin{proof}
Below we shall make use of the complex conjugation and transpose operations with respect to the standard basis
$\{|x\ra\}$, $x\in \{0,1\}^n$.
Expressing Majorana operators $c_1,\ldots,c_{2n}$ in terms of qubit Pauli operators one gets 
\begin{align}
c_{1}& =X_1\qquad \qquad c_{2}=Y_1 \nonumber\\
c_{2j-1}&=Z_1\otimes \ldots Z_{j-1} \otimes X_j \qquad  c_{2j}=Z_1\otimes \ldots Z_{j-1} \otimes Y_j \qquad \qquad 2\leq j\leq n.
\end{align}
From the above we see that
\begin{equation}
c_{2j}^{\star}=c_{2j}^{T}=-c_{2j} \qquad \text{and} \qquad c_{2j-1}^{\star}=c_{2j-1}^{T}=c_{2j-1} \qquad 1\leq j\leq n.
\label{eq:cstar}
\end{equation}

Now let us proceed with the proof of Lemma \ref{lem:fermi_timereversal}. We may simultaneously diagonalize the commuting operators $h^{\prime}$ and $Z_{n+1}=-i c_{2n+1}c_{2n+2}$. The eigenvalues of $h^{\prime}$ consist of eigenvalues of $h_1+h_2$  ($Z_{n+1}=+1$) as well as all eigenvalues of $h_2-h_1$  ($Z_{n+1}=-1$), i.e., 
\begin{equation}
\lambda_{\max}(h^{\prime})=\max\left\{\lambda_{\max}(h_1+h_2), \lambda_{\max}(-h_1+h_2)\right\}.
\label{eq:same}
\end{equation}

Define a unitary
\[
U_{\mathrm{odd}}=c_{1}c_{3}c_{5}\ldots c_{2n-1}.
\]
A direct calculation using Eq.~\eqref{eq:cstar} gives
\begin{equation}
U_{\mathrm{odd}}^{\dagger} (h_1+h_2) U_{\mathrm{odd}}= (-h_1+h_2)^{T}.
\label{eq:oddconj}
\end{equation}
Since conjugation by the unitary matrix $U_{\mathrm{odd}}$ and the transpose operation both preserve the spectrum, we see that Hamiltonians $h_1+h_2$ and $-h_1+h_2$ have the same spectrum.
Thus so do $h$ and $h^{\prime}$. 

Now suppose that $\omega$ is an $(n+1)$-mode state which is either a Gaussian state
or a Slater determinant. Define a projector
\[
\Pi_{z}=\frac{1}{2}\left(1+(-1)^z Z_{n+1}\right).
\]
Since $Z_{n+1}$ commutes with $h^{\prime}$, one of the states
\[
|\omega(z)\rangle=\frac{1}{\|\Pi_{z}|\omega\rangle\|}\Pi_{z} |\omega\rangle, \qquad  \quad z\in \{0,1\}
\]
has energy at least $\langle \omega|h^{\prime}|\omega\rangle$. Note that we may write
\[
|\omega(z)\rangle=|\alpha\rangle\otimes |z\rangle.
\]
Now observe that
\[
\langle \alpha|h|\alpha\rangle=\langle \omega(0)|h^{\prime}|\omega(0)\rangle
\]
and, using Eq.~\eqref{eq:oddconj},
\[
\langle \alpha^{\star} |U_{\mathrm{odd}}^{\dagger}hU_{\mathrm{odd}}|\alpha^{\star}\rangle=\langle \alpha|-h_1+h_2|\alpha\rangle=\langle \omega(1)|h^{\prime}|\omega(1)\rangle
\]
The $n$-mode $\phi$ claimed in the Lemma is chosen to be $|\phi\rangle=|\alpha\rangle$ if  
\[
\langle \omega(0)|h^{\prime}|\omega(0)\rangle \geq \langle \omega|h^{\prime}|\omega\rangle 
\]
and $|\phi\rangle=U_{\mathrm{odd}}|\alpha^{\star}\rangle$ otherwise. 
It remains to notice that the projector $\Pi_z$ maps the set of Gaussian states
(Slater determinant states) to itself~\cite{bravyi2004lagrangian}.
The same applies to  the complex conjugation operation. 
Thus the state $\phi$ is Gaussian (Slater determinant) whenever $\omega$ is
Gaussian (Slater determinant).
\end{proof}

\section{Maximum eigenvalue of  Richardson's model}
\label{app:Richardson}

Recall that we consider $n=2N$ fermi modes and a  Hamiltonian 
\be
h=P^\dag P, \qquad P=\sum_{j=1}^N a_{2j-1} a_{2j}.
\ee
In this section we prove that the largest eigenvalue of $h$ is 
\be
\label{Lmax1}
\lambda_{\max}(h) = \frac{N(N+2) + \epsilon_N}4,
\ee
where $\epsilon_N={N \pmod 2} \in \{0,1\}$.

First let us prove that Eq.~(\ref{Lmax1}) gives an upper bound
on $\lambda_{\max}(h)$. Consider an identity decomposition
\[
I=\sum_{k=0}^n \Lambda_k
\]
where $\Lambda_k$ is a projector onto the $k$-particle subspace 
(that is, the subspace spanned by all states with exactly $k$ occupied modes). 
 Since $h$
is particle number preserving, one has $h\Lambda_k=\Lambda_k h$ for all $k$.
Therefore
\be
\lambda_{\max}(h)=\max_{0\le k\le n} \| \Lambda_k h \Lambda_k\| = \max_{0\le k\le n} \| P\Lambda_k\|^2,
\ee
By definition, each term in $P$ annihilates a pair of particles. Therefore
$P\Lambda_k=0$ for $k=0,1$
and $P\Lambda_k=\Lambda_{k-2} P\Lambda_k$ for $k\ge 2$. This gives
\be
\lambda_{\max}(h)=\max_{2\le k\le n} \| \Lambda_{k-2} P\Lambda_k\|^2.
\ee
We shall need the following simple fact.
\begin{fact}
\label{fact:sparse_norm}
Let $M$ be a complex matrix such that each row of $M$ has at most $R$ non-zeros
and each column has at most $C$ non-zeros. Then 
\be
\label{norm_fact}
\|M\| \le \sqrt{RC}\cdot \max_{i,j} |M_{i,j}|.
\ee
\end{fact}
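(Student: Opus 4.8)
The plan is to establish the stronger and completely standard bound $\|M\| \le \sqrt{r_{\max}\, c_{\max}}$, where $r_{\max} = \max_i \sum_j |M_{i,j}|$ is the largest absolute row sum and $c_{\max} = \max_j \sum_i |M_{i,j}|$ is the largest absolute column sum, and then to feed in the sparsity hypotheses. Writing $m = \max_{i,j}|M_{i,j}|$, the assumption that each row has at most $R$ nonzero entries gives $r_{\max} \le R\,m$, and the assumption that each column has at most $C$ nonzero entries gives $c_{\max} \le C\,m$. Substituting these into the bound yields $\|M\| \le \sqrt{RC}\,m$, which is exactly Eq.~\eqref{norm_fact}.

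It therefore remains to prove $\|M\| \le \sqrt{r_{\max}\, c_{\max}}$. I would work directly with the associated bilinear form: for arbitrary vectors $x$ and $y$,
\[
|\langle y, M x\rangle| \le \sum_{i,j} |M_{i,j}|\,|y_i|\,|x_j| = \sum_{i,j} \left(\sqrt{|M_{i,j}|}\,|y_i|\right)\left(\sqrt{|M_{i,j}|}\,|x_j|\right).
\]
Applying the Cauchy--Schwarz inequality to the right-hand side, viewed as an inner product indexed by the pairs $(i,j)$, gives
\[
|\langle y, M x\rangle| \le \sqrt{\sum_{i,j}|M_{i,j}|\,|y_i|^2}\;\sqrt{\sum_{i,j}|M_{i,j}|\,|x_j|^2}.
\]
Carrying out the inner sums, $\sum_{i,j}|M_{i,j}|\,|y_i|^2 = \sum_i |y_i|^2 \sum_j |M_{i,j}| \le r_{\max}\|y\|^2$, and likewise $\sum_{i,j}|M_{i,j}|\,|x_j|^2 \le c_{\max}\|x\|^2$. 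Hence $|\langle y, M x\rangle| \le \sqrt{r_{\max}\, c_{\max}}\,\|x\|\,\|y\|$, and taking the supremum over unit vectors $x$ and $y$ yields $\|M\| \le \sqrt{r_{\max}\, c_{\max}}$, completing the argument.

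Since this is a classical estimate, there is no serious obstacle; the only genuine step is the weighted Cauchy--Schwarz split, in which the factor $|M_{i,j}|$ is distributed symmetrically as $\sqrt{|M_{i,j}|}\cdot\sqrt{|M_{i,j}|}$ between the two vectors so that the two resulting sums collapse to row and column sums respectively. An equally short alternative is to observe that $M^{\dagger}M$ is Hermitian, so that $\|M\|^2$ equals the spectral radius of $M^{\dagger}M$; bounding the spectral radius by the induced max-row-sum norm and using submultiplicativity gives $\|M\|^2 \le \|M^{\dagger}M\|_{\infty} \le \|M^{\dagger}\|_{\infty}\,\|M\|_{\infty} = c_{\max}\, r_{\max}$, where one uses that transposition exchanges row and column sums and that conjugation leaves magnitudes unchanged. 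Either derivation suffices, after which the sparsity substitution described above is immediate.
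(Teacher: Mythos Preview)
Your argument is correct. The paper's proof is close in spirit but organized a bit differently: it normalizes so that $\max_{i,j}|M_{i,j}|\le 1$, takes a unit vector $\psi$ attaining $\|M\|^2=\|M\psi\|^2$, applies Cauchy--Schwarz row by row in the unweighted form $\bigl|\sum_{j\in\mathcal{C}(i)} M_{i,j}\psi_j\bigr|^2 \le |\mathcal{C}(i)|\sum_{j\in\mathcal{C}(i)}|\psi_j|^2 \le R\sum_{j\in\mathcal{C}(i)}|\psi_j|^2$, and then swaps the order of summation to pick up the column count $|\mathcal{R}(j)|\le C$. Your route instead establishes the sharper intermediate inequality $\|M\|\le\sqrt{r_{\max}\,c_{\max}}=\sqrt{\|M\|_\infty\,\|M\|_1}$ (the Schur test, or the $p=1,\infty$ endpoint of Riesz--Thorin) via the weighted split $|M_{i,j}|=\sqrt{|M_{i,j}|}\cdot\sqrt{|M_{i,j}|}$ in the bilinear form, and then specializes using $r_{\max}\le Rm$, $c_{\max}\le Cm$. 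Both proofs hinge on a single Cauchy--Schwarz followed by exchanging the order of summation; the paper's version is tailored directly to the sparsity hypothesis, while yours passes through a slightly stronger and more portable statement.
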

For completeness, we provide a proof at the end of this section.
Choose $M=\Lambda_{k-2} P \Lambda_k$ and consider the matrix of $M$
in the standard  basis $\{|x\ra\}$, $x\in \{0,1\}^n$. Using the definition of $P$ one can check that 
each row of $M$ has at most $R=N+1-{\lceil k/2 \rceil}$ non-zeros
while each column of $M$ has at most $C={\lfloor k/2 \rfloor}$ non-zeros. Furthermore,
each non-zero element of $M$ has magnitude one. Thus
\[
\|\Lambda_{k-2} P \Lambda_k\|^2 \le (N+1-{\lceil k/2 \rceil}){\lfloor k/2 \rfloor}\equiv f(k).
\]
Therefore
\[
\lambda_{\max}(h)\le \max_{2\le k\le n}  f(k)=\frac{N(N+2) + \epsilon_N}4.
\]
To show that this upper bound is tight consider a state 
$|\psi_k\ra=(P^\dag)^k|0^n\ra$. Let 
\[
K=\sum_{j=1}^n a_j^\dag a_j
\]
be the particle number operator.
Using the commutation rules
$[h,P^\dag]=P^\dag(N\cdot I - K)$, 
one easily gets 
\[
h|\psi_k\ra=g(k)|\psi_k\ra, \quad \mbox{where} \quad g(k)=k(N-k+1).
\]
Thus 
\[
\lambda_{\max}(h)\ge \max_{0\le k\le N} g(k) = \frac{N(N+2) + \epsilon_N}4.
\]

\begin{proof}[\bf Proof of Fact~\ref{fact:sparse_norm}]
We shall label rows and columns of $M$ by $i$ and $j$ respectively. 
For each row $i$  let $\calC(i)$ be the set of columns $j$ such that
$M_{i,j}\ne 0$. For each column $j$ let $\calR(j)$ be the set of rows $i$
such that $M_{i,j}\ne 0$. Assume wlog that $|M_{i,j}|\le 1$ for all $i,j$.
Let $\psi$ be a normalized vector such that $\|M\|^2 = \| M \psi\|^2$. The Cauchy-Schwarz inequality gives
\[
\|M\psi\|^2 = \sum_i \left| \sum_{j\in \calC(i)} M_{i,j} \psi_j  \right|^2 
\le \sum_i  |\calC(i)| \cdot \sum_{j\in \calC(i)} |\psi_j|^2
\le R \sum_i \sum_{j\in \calC(i)} |\psi_j|^2.
\]
For the last inequality we used $|\calC(i)|\le R$. 
Changing the order of summations gives
\[
\|M\psi\|^2 \le R \sum_j |\psi_j|^2 \cdot |\calR(j)|\le  CR\sum_j |\psi_j|^2 = CR.
\]
\end{proof}

\section{Quadratic optimization with orthogonality constraints}
\label{app:QPOC}

Let $\calM_d$ be the space of real $d\times d$ matrices.
We say that $X\in \calM_d$ is positive semidefinite and write $X\ge 0$
if $X=X^T$ and all eigenvalues of $X$ are non-negative.
The notation $X\le Y$ stands for $Y-X\ge 0$.

Consider an objective function $F\, : \, \calM_d \to \RR$ such that
\be
\label{F(X)}
F(X)=\sum_{p,q,r,s=1}^d W_{pqrs} X_{p,q} X_{r,s} 
\ee
where $W_{pqrs}$  are real coefficients. We can assume wlog that 
$W_{pqrs}=W_{rspq}$
for all $p,q,r,s$. 
Consider the following problem~\cite{so2009improved}.
\begin{problem}[\bf Qp-Oc]
Given a linear subspace $\calL\subseteq \calM_d$,  
compute
\be
\label{QPOC}
\theta(\calL,W)=\max_{X\in \calL\, : \, \|X\|\le 1} \;  F(X).
\ee
\end{problem}
Here and below $\|X\|$ denotes the operator norm (the largest singular value).
Note that $X=0$ is a feasible solution so that $\theta(\calL,W)\ge 0$.
We shall need the following result established by So~\cite{so2009improved}.
\begin{lemma}
\label{lemma:So}
The problem Qp-Oc admits an approximation algorithm that 
outputs a matrix $X\in \calL$ such that 
$\|X\|\le 1$ and 
\be
F(X) \ge \frac{\theta(\calL,W)}{O(\log{d})}
\ee
with probability at least $2/3$.  The algorithm has runtime $poly(d)$.
\end{lemma}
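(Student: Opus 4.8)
The plan is to transplant the SDP-relaxation-and-Gaussian-rounding strategy of Theorem~\ref{thm:alg} to the matrix-valued variable $X$, with the spectral-norm constraint $\|X\|\le 1$ playing the role of the box constraint $|x_i|\le 1$. First I would relax Qp-Oc by introducing, for each entry $X_{p,q}$, a vector $v_{p,q}\in\RR^{m}$ and maximizing $\sum_{p,q,r,s} W_{pqrs}\langle v_{p,q},v_{r,s}\rangle$ subject to: the linear constraints cutting out $\calL$, imposed on each coordinate slice of the array $(v_{p,q})$ (equivalently forcing each matrix $A^{(k)}$ below into $\calL$); and a pair of semidefinite relaxations of $\|X\|\le 1$, namely that the $d\times d$ matrices $\big[\sum_{p}\langle v_{p,q},v_{p,q'}\rangle\big]_{q,q'}$ and $\big[\sum_{q}\langle v_{p,q},v_{p',q}\rangle\big]_{p,p'}$ are both $\preceq I$. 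Taking the rank-one realization $v_{p,q}=X_{p,q}e$ for a fixed unit $e$ shows any feasible $X$ is feasible for the relaxation (the two matrices become $X^TX$ and $XX^T$), so the optimum $\theta_{\mathrm{SDP}}\ge\theta(\calL,W)$, and the SDP is solved in time $poly(d)$.

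Next I would round. Writing the relaxation optimum as real matrices $A^{(1)},\ldots,A^{(m)}\in\calL$ with $(A^{(k)})_{p,q}=(v_{p,q})_k$, draw $g=(g_1,\ldots,g_m)$ of i.i.d.\ $N(0,1)$ variables and set $\hat X = T^{-1}\sum_k g_k A^{(k)}$ with $T=c\sqrt{\log d}$. Then $\hat X\in\calL$ automatically, and since $\EE[g_k g_{k'}]=\delta_{k,k'}$ we get $\EE[\hat X_{p,q}\hat X_{r,s}]=\langle v_{p,q},v_{r,s}\rangle/T^2$, hence $\EE[F(\hat X)]=\theta_{\mathrm{SDP}}/T^2\ge\theta(\calL,W)/(c^2\log d)$. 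This is the exact matrix analogue of the identity $\EE_r(z_iz_j)=\langle v^i,v^j\rangle/T^2$ used in Lemma~\ref{lem:approx}.

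The crux is enforcing $\|\hat X\|\le 1$. Here the coordinatewise truncation of Theorem~\ref{thm:alg}, which handled the box constraint by a union bound over $n$ entries at cost $\log n$, has no counterpart, because the spectral constraint is global and entrywise control is useless. Instead I would invoke the noncommutative Khintchine (matrix Gaussian series) bound $\EE\big\|\sum_k g_k A^{(k)}\big\|\le C\sqrt{\log d}\,\max\big\{\|\sum_k A^{(k)}(A^{(k)})^T\|^{1/2},\|\sum_k (A^{(k)})^T A^{(k)}\|^{1/2}\big\}$. Crucially, the two matrix-variance quantities on the right are exactly the matrices constrained to be $\preceq I$ in the relaxation, so each is $\le 1$ and $\EE\|\sum_k g_k A^{(k)}\|\le C\sqrt{\log d}$; this single bound replaces the scalar union bound and is the reason the final loss is $\log d$ rather than $\sqrt d$. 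Moreover the map $g\mapsto\|\sum_k g_k A^{(k)}\|$ is $1$-Lipschitz: for unit $u,w$, triangle inequality then Cauchy--Schwarz give $\|\sum_{p,q} u_p w_q v_{p,q}\|\le(\sum_p\|\sum_q w_q v_{p,q}\|^2)^{1/2}\le 1$ by the column constraint. Gaussian concentration then yields $\Pr[\|\hat X\|>1]=\Pr[\|\sum_k g_k A^{(k)}\|>T]\le d^{-\Omega(1)}$ once $c$ is a large enough constant, so $\hat X$ is feasible with high probability and no rescaling is needed.

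Finally I would convert the in-expectation guarantee into a high-probability one, mirroring the end of the proof of Theorem~\ref{thm:alg}. Since $F$ is a homogeneous quadratic, any $\hat X\in\calL$ obeys $F(\hat X)\le\|\hat X\|^2\theta_{\mathrm{SDP}}$, so on the event $E=\{\|\hat X\|\le 1\}$ one has $F(\hat X)\le\theta_{\mathrm{SDP}}$; combined with $\Pr[E^c]=d^{-\Omega(1)}$ and the bounded moments of $\|\hat X\|^2$, this gives $\EE[F(\hat X)\mathbf{1}_E]\ge\theta(\calL,W)/O(\log d)$. A reverse-Markov inequality against the upper bound $\theta_{\mathrm{SDP}}$ then yields $\Pr[F(\hat X)\ge\theta(\calL,W)/O(\log d)\text{ and }\|\hat X\|\le1]=\Omega(1/\log d)$, and $O(\log d)$ independent repetitions (keeping the best feasible output) boost the success probability above $2/3$. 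I expect the main obstacle to be precisely the spectral-norm step: matching the matrix-variance parameters to the SDP constraints and marshalling the noncommutative Khintchine and Gaussian-concentration bounds is the genuinely matrix-flavored difficulty absent from the scalar argument of Section~\ref{sec:qubit}, and it is where the entire $\log d$ loss is incurred.
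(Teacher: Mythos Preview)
Your proposal is correct and is essentially the approach the paper attributes to So~\cite{so2009improved}: the paper does not prove Lemma~\ref{lemma:So} itself but cites it, giving only a one-paragraph sketch (solve the SDP relaxation Eq.~(\ref{QPOCrelax}), Gaussian-round to a random matrix $X^\alpha$ with $\|X^\alpha\|\le O(\sqrt{\log d})$ with high probability, then rescale). Your write-up fleshes this out faithfully, correctly identifying the noncommutative Khintchine/matrix Gaussian series bound as the mechanism that ties the SDP partial-trace constraints $\mathrm{Tr}_1(\rho),\mathrm{Tr}_2(\rho)\le I$ to the operator-norm control, and your up-front scaling by $T=c\sqrt{\log d}$ is equivalent to the paper's rescale-at-the-end description.
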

We shall also need an SDP relaxation of Qp-Oc introduced in Ref.~\cite{so2009improved}.
To define this relaxation,  it will be convenient to identify a matrix
$X\in \calM_d$ and a vector  $|X\ra \in \RR^d\otimes \RR^d$ such that 
\[
|X\ra = \sum_{p,q=1}^d X_{p,q} |p,q\ra.
\]
Define an operator  $W\, : \, \RR^d\otimes \RR^d \to \RR^d\otimes \RR^d$ such that 
\[ 
W=\sum_{p,q,r,s=1}^d W_{pqrs} |p,q\ra\la r,s|.
\]
Note that $F(X)=\la X|W|X\ra$.
Suppose  $X\in \calM_d$ is a feasible solution of the Qp-Oc. 
Define 
\[
\rho=|X\ra\la X|=\sum_{p,q,r,s=1}^d X_{p,q} X_{r,s} |p,q\ra\la r,s|.
\]
The constraint $\|X\|\le 1$ gives $XX^T\le I$ and $X^TX\le I$.
The latter conditions can be rephrased in terms of the partial traces of $\rho$ as
\[
\mathrm{Tr}_1 (\rho)\equiv \sum_{p,q,s=1}^d X_{p,q} X_{p,s} |q\ra \la s| \le I
\qquad \mbox{and} \qquad 
\mathrm{Tr}_2 (\rho)\equiv \sum_{p,q,r=1}^d X_{p,q} X_{r,q} |p\ra \la r| \le I.
\]
Finally, let us choose a set of matrices $L^1,\ldots,L^k\in \calM_d$ such that 
\[
X\in \calL \quad \mbox{iff} \quad \la L^a |X\ra=0 \quad \mbox{for all $a=1,\ldots,k$}.
\]
Then clearly,  $\la L^a |\rho|L^a\ra=0$ for all $a$.
Ref.~\cite{so2009improved} defines the following  SDP relaxation of the Qp-Oc:
\be
\label{QPOCrelax}
\theta^*(\calL,W)=\max_{\rho}  \mathrm{Tr}(\rho W)
\quad \mbox{{\bf subject to}}
\quad 
\left\{ \ba{rcl}
\rho &\ge& 0,\\
\mathrm{Tr}_1 (\rho) & \le & I,\\
\mathrm{Tr}_2 (\rho) & \le & I,\\
\la L^a |\rho|L^a\ra & =& 0 \quad \mbox{for all $a=1,\ldots,k$}.\\
\ea\right.
\ee
Here the maximization is over symmetric real matrices $\rho$ of size $d^2\times d^2$, 
that is,
\[
\rho=\sum_{p,q,r,s=1}^d \rho_{pqrs} |p,q\ra\la r,s|, \qquad \rho_{pqrs}=\rho_{rspq}.
\]
Note that $\theta^*(\calL,W) \ge \theta(\calL,W)$ since 
Eq.~(\ref{QPOCrelax}) is a relaxation of Eq.~(\ref{QPOC}).
The following result is a special case of Theorem~1 from Ref.~\cite{so2009improved}.
\begin{lemma}
\label{lemma:So1}
\be
\theta^*(\calL,W) \le O(\log{d}) \cdot  \theta(\calL,W).
\ee
\end{lemma}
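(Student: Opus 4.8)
The inequality $\theta^*(\calL,W)\ge\theta(\calL,W)$ is immediate since \eqref{QPOCrelax} relaxes \eqref{QPOC}, so the content is the reverse bound $\theta^*\le O(\log d)\,\theta$. The plan is to exhibit a single feasible point $X$ of the Qp-Oc whose objective $F(X)$ is within a $\log d$ factor of the relaxed optimum; since every feasible $X$ satisfies $F(X)\le\theta(\calL,W)$, this yields the claim. This is exactly the rounding underlying Lemma~\ref{lemma:So}, and the argument parallels the scalar analysis of Lemmas~\ref{lem:approx} and \ref{lem:l2}.

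First I would unpack the optimal SDP solution $\rho$. Since $\rho\ge0$ acts on $\RR^d\otimes\RR^d$, its spectral decomposition furnishes matrices $Z^1,\dots,Z^K\in\calM_d$ with $\rho=\sum_m |Z^m\ra\la Z^m|$. The three linear constraints translate cleanly: the condition $\la L^a|\rho|L^a\ra=0$ forces $\la L^a|Z^m\ra=0$ for every $a,m$, i.e. each $Z^m\in\calL$; and $\mathrm{Tr}_1(\rho)\le I$, $\mathrm{Tr}_2(\rho)\le I$ become $\sum_m (Z^m)^T Z^m\le I$ and $\sum_m Z^m (Z^m)^T\le I$. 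The objective reads $\mathrm{Tr}(\rho W)=\sum_m F(Z^m)=\theta^*(\calL,W)$.

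Next comes Gaussian rounding. Set $T=c\sqrt{\log d}$ and draw i.i.d.\ $g_1,\dots,g_K\sim N(0,1)$, forming the random matrix $X=T^{-1}\sum_m g_m Z^m$. Because the $Z^m$ lie in the subspace $\calL$, so does $X$ automatically; and since $\EE[g_m g_{m'}]=\delta_{m,m'}$ one has $\EE\,|X\ra\la X|=\rho/T^2$, hence $\EE[F(X)]=\theta^*(\calL,W)/T^2$. It remains only to enforce the norm constraint $\|X\|\le1$. Here the two trace inequalities bound the Gaussian variance proxies $\|\sum_m Z^m (Z^m)^T\|$ and $\|\sum_m (Z^m)^T Z^m\|$ by $1$, so a non-commutative Khintchine (Gaussian matrix deviation) estimate gives $\EE\,\|X\|=O(\sqrt{\log d}/T)=O(1/c)$, which is a constant that the choice of $c$ makes small.

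The main obstacle is the last step: unlike the scalar clipping in Lemma~\ref{lem:l2}, where each coordinate is truncated independently and the error is controlled by a union bound, the spectral-norm constraint is global and nonlinear, so it cannot be repaired coordinatewise without leaving $\calL$. I would instead repair feasibility by the linear rescaling $X\mapsto X/\max(1,\|X\|)$, which preserves membership in $\calL$ and produces a feasible point whose objective is $F(X)/\max(1,\|X\|)^2$. The crux is then to show this rescaling costs at most a constant factor in expectation, i.e. that $\EE\big[|F(X)|\,\mathbf 1[\|X\|>1]\big]$ is a small fraction of $\EE[F(X)]=\theta^*/T^2$. This follows from Cauchy--Schwarz together with the sub-Gaussian concentration of $\|X\|$ about its $O(1/c)$ mean: taking $c$ large makes $\mathrm{Pr}[\|X\|>1]$ small enough to dominate the (polynomially bounded) second moment $\EE[F(X)^2]$, exactly as the tail bound $e^{-\Omega(T^2)}$ controls the truncation error in \eqref{eq:L2}. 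Concluding, for a suitable constant $c$ the rescaled matrix is a feasible Qp-Oc point with expected objective $\Omega(\theta^*/\log d)$, whence some realization satisfies $F(X)\ge\Omega(\theta^*/\log d)$; combined with $F(X)\le\theta(\calL,W)$ this gives $\theta^*(\calL,W)\le O(\log d)\,\theta(\calL,W)$.
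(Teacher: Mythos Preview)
The paper does not actually prove this lemma; it cites it as a special case of Theorem~1 of So~\cite{so2009improved} and only sketches the rounding in one sentence (write the SDP optimum as a Gaussian mixture of rank-one terms $|X\rangle\langle X|$, use a matrix Khintchine bound to get $\|X\|\le O(\sqrt{\log d})$ with high probability, rescale). Your outline---spectral decomposition of $\rho$, Gaussian combination $X=T^{-1}\sum_m g_m Z^m$, membership in $\calL$ automatic, partial-trace constraints feeding the Khintchine variance proxy---matches that sketch and is the right skeleton.

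The gap is in the last step. You want to bound the truncation loss by Cauchy--Schwarz,
\[
\EE\big[|F(X)|\,\mathbf 1[\|X\|>1]\big]\le \big(\EE[F(X)^2]\big)^{1/2}\,\Pr[\|X\|>1]^{1/2},
\]
and then assert that the second moment is ``polynomially bounded''. But $F(X)$ is a degree-$2$ Gaussian chaos with variance $2T^{-4}\mathrm{Tr}(\rho W\rho W)$, and this quantity is not controlled by $\theta^*$ or by $d$; it can be arbitrarily large relative to $(\theta^*/T^2)^2$. In the qubit analogue you invoke, the corresponding step succeeds only because of the external Harrow--Montanaro inequality $\sum_{i,j}|C_{ij}|\le Kn\,\lambda_{\max}(H)$, Eq.~\eqref{eq:1norm}, which ties the coefficient norm back to the optimum. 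No such ingredient is available (or cited) in the general Qp-Oc setting, so your Cauchy--Schwarz bound does not close.

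A clean way to finish avoids the second moment of $F$ altogether. Since $F$ is homogeneous of degree two and $X/\|X\|\in\calL$ has unit operator norm, one has the deterministic bound $F(X)\le \|X\|^2\,\theta(\calL,W)$ for every $X\in\calL$. Splitting $\theta^*=\EE[F(X)]$ at the threshold $\tau=O(\sqrt{\log d})$ then gives
\[
\theta^*\le \tau^2\,\theta + \theta\cdot \EE\big[\|X\|^2\,\mathbf 1[\|X\|>\tau]\big],
\]
and the second expectation is $o(1)$ by the Khintchine moment bounds $(\EE\|X\|^{2p})^{1/p}\le O(p)$ combined with a high-moment Markov tail at $p\sim\log d$. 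This yields $\theta^*\le O(\log d)\,\theta$ directly. Your argument is repairable along these lines, but as written the ``polynomially bounded second moment'' claim is the missing idea.
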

The approximation algorithm of Lemma~\ref{lemma:So} 
works by solving the SDP relaxation defined in Eq.~(\ref{QPOCrelax}) and 
representing  the optimal solution $\rho$ as 
a probabilistic mixture of pure states  $|X^\alpha\ra\la X^\alpha|$
such that $\|X^\alpha\| \le O(\sqrt{\log{d}})$ with high probability. Then $Y^\alpha\equiv X^\alpha/O(\sqrt{\log{d}})$ is a feasible solution of
the original Qp-Oc with the expected value of $F(Y^\alpha)$
equal to $\theta^*(\calL,W)/O(\log{d})$.

\section{Maximizing the energy over Slater determinants and Gaussian states}
\label{app:SlaterProof}

In this section we prove Theorem~\ref{thm:slater}
and the first part of Theorem~\ref{thm:gauss_approx}.
We begin by summarizing some well-known properties
of Gaussian and Slater determinant states, see e.g. 
Refs.~\cite{terhal2002classical,bravyi2004lagrangian}.

Suppose $\rho$ is a (mixed) state of $n$ fermi modes. 
We say that $\rho$ is a mixed Gaussian state iff there exists
an orthogonal matrix $R\in O(2n)$ 
and real numbers $\lambda_1,\ldots,\lambda_n \in [0,1]$
such that 
\be
\label{GaussMixed}
\rho = (U_R) \rho_\lambda (U_R)^\dag, \qquad \rho_\lambda\equiv \bigotimes_{j=1}^n 
(\lambda_j |0\ra\la 0| + (1-\lambda_j)|1\ra\la 1|).
\ee
Here $U_R$ is a unitary operator defined by Eq.~(\ref{Rmatrix}).
By definition, any mixed Gaussian state is a probabilistic mixture of pure ones, see Section~\ref{sec:fermion}.
Thus one can compute $\Lgauss{h}$ by maximizing the energy
$\mathrm{Tr}(\rho h)$ over the set of mixed Gaussian states $\rho$.

Given any $n$-mode state $\rho$,  define a covariance matrix
\[
X_{p,q} = (-i/2)\mathrm{Tr}(\rho (c_p c_q -c_q c_p)), \qquad 1\le p,q\le 2n.
\]
It is known that any mixed Gaussian state $\rho$ obeys Wick's theorem, that is,
\be
\label{WickCCCC}
-\mathrm{Tr}(\rho c_p c_q c_r c_s ) = X_{p,q} X_{r,s} - X_{p,r} X_{q,s} + X_{p,s} X_{q,r}
\ee
for any tuple~$(p,q,r,s)$ of pairwise distinct indices. 
Furthermore, for any real anti-symmetric matrix $X$ such that $\|X\|\le 1$ there exists
a mixed Gaussian state $\rho$ 
such that $X$ is the covariance matrix of $\rho$.
Combining 
the above facts gives
\be
\label{FermiQpOc}
\Lgauss{h}=\max_{X\in \calL\, : \, \|X\|\le 1}\; F(X),
\ee
where $\calL$ is the space of real anti-symmetric matrices of size $2n$
and
\be
\label{FermiF(X)}
F(X)=-3\sum_{p,q,r,s=1}^{2n} W_{pqrs} X_{p,q} X_{r,s}.
\ee
Here we used the assumption that $W$ is fully antisymmetric
and assumed wlog that $h_1=0$ (using Lemma~\ref{lem:fermi_timereversal}).
This is an instance of the Qp-Oc problem considered in the previous section.
The first part of Theorem~\ref{thm:gauss_approx} now follows directly
from Lemma~\ref{lemma:So}.

To prove Theorem~\ref{thm:slater} let us 
consider the subset of mixed Gaussian states $\rho$ satisfying
\be
\label{SlaterMixed}
\mathrm{Tr}(\rho\, a_p a_q)=0,\qquad  1\le p,q\le n.
\ee
We claim that 
such states are probabilistic mixtures of Slater determinant states. 
Indeed, define a covariance matrix $Q_{i,j}=\mathrm{Tr}(\rho a_i^\dag a_j)$.
Let $V$ be unitary operator such that $V^\dag Q V$ is a diagonal matrix
with entries $\lambda_1,\ldots,\lambda_n$ on the main diagonal. 
Define a new set of annihilation operators
\[
b_p=\sum_{q=1}^n V_{q,p} a_q, \qquad 1\le p\le n.
\] 
Then $\mathrm{Tr}(\rho b_p^\dag b_q)=\lambda_p \delta_{p,q}$.
Furthermore, Eq.~(\ref{SlaterMixed}) gives
$\mathrm{Tr}(\rho b_p b_q)=0$ for all $p,q$.
Wick's theorem now
implies that $\rho$ is a product of single-mode states
$\lambda_j b_j^\dag b_j+ (1-\lambda_j) b_j b_j^\dag$.
Thus $\rho$ is a mixture of pure product states such that 
each mode $b_j$ either empty or occupied, that is, 
$\rho$ is a mixture of Slater determinants, as claimed. 

Combining  the above facts gives
\be
\label{SlaterOpt}
\Lslater{h}=\max_{X\in \calL'\, : \, \|X\|\le 1}\; F(X),
\ee
where $\calL'\subseteq \calL$ is the linear subspace
of covariance matrices $X$ 
such that the corresponding mixed Gaussian state
satisfies Eq.~(\ref{SlaterMixed}).
Recalling that $a_p=(c_{2p-1} + i c_{2p})/2$ one gets
\[
\calL'= \{ X \in \calL \, : \,  X_{2p-1,2q-1}=X_{2p,2q} \quad \mbox{and} \quad
X_{2p-1,2q}=-X_{2p,2q-1} \quad \mbox{for all $1\le p <q\le n$}\}.
\]
We conclude that Eq.~(\ref{SlaterOpt}) is another instance of 
the Qp-Oc problem from the previous section.
Theorem~\ref{thm:slater} now follows directly
from Lemma~\ref{lemma:So}: the latter algorithm produces a mixed Gaussian state~$\rho$ satisfying~\eqref{SlaterMixed} and $\tr(\rho h)\geq \frac{\Lslater{h}}{O(\log n)}$.  Eq.~(\ref{GaussMixed}) shows that $\rho$ is a convex combination of Slater determinants. By first computing $U_R$ and the numbers $\lambda_1,\ldots,\lambda_n\in [0,1]$ from~$\rho$, one can thus efficiently sample from an ensemble of Slater determinant states~$\phi$ such that the expected value of $\bra{\phi}h\ket{\phi}$ is $\tr(\rho h)$. Repeatedly sampling from this distribution $O(\log(n))$ times and choosing the output Slater determinant state~$\phi$ with the highest energy, we obtain a success probability of at least $2/3$. This follows from an argument identical to that given after Eq.~\eqref{eq:expectationbound}.

As a side remark we note that optimization over separable states
considered in Section~\ref{sec:qubit} can  also be represented as a special case
of Qp-Oc (although this representation is slightly more cumbersome
and not very insightful).

\section{Proof of Theorem~\ref{thm:gauss_approx}}
\label{app:GaussLower}

Consider a fermionic Hamiltonian
\[
h=\sum_{p,q,r,s=1}^{2n} W_{pqrs} c_p c_q c_r c_s.
\]
Here we assumed wlog that $h_1=0$ (using Lemma~\ref{lem:fermi_timereversal}).
Let $\psi$ be a  largest eigenvector of $h$ such that 
\[
\Lmax{h}=\la \psi|h|\psi\ra.
\]
In this section we prove that $\Lgauss{h}\ge \Lmax{h}/O(n\log{n})$.
Moreover, we show that  the algorithm from the previous section outputs
a Gaussian state that achieves this approximation ratio. 
The proof given below depends crucially on the material
of Appendices~\ref{app:QPOC},\ref{app:SlaterProof}.

Define
an operator $\rho \, : \, \RR^{2n} \otimes \RR^{2n} \to \RR^{2n} \otimes \RR^{2n}$ such that
\be
\label{Yfeasible}
\rho=-\frac1{2n} \sum_{p,q,r,s=1}^{2n} \epsilon_{p,q} \epsilon _{r,s} \mathrm{Re}(\la \psi | c_p c_q c_r c_s|\psi\ra) |p,q\rangle\langle r,s|,
\ee
where $\epsilon_{p,q}=1$ if $p\ne q$ and $\epsilon_{p,q}=0$ if $p=q$.
Define also
an operator 
\[ 
W=-\sum_{p,q,r,s=1}^{2n} W_{pqrs} |p,q\ra\la r,s|.
\]
Here we introduced the minus sign to cancel the minus
sign that comes from Wick's theorem, see Eqs.~(\ref{FermiQpOc},\ref{FermiF(X)}).
Note that 
\be
\label{energy}
\mathrm{Tr}(\rho W)=\frac1{2n} \la \psi|h|\psi\ra = \Lmax{h}/2n.
\ee
We claim that $\rho$ is a feasible solution
of the SDP relaxation defined in Eq.~(\ref{QPOCrelax}),
where $d=2n$ and $\calL$ is the space of real anti-symmetric matrices.
Indeed, consider any state $|\phi\ra \in \RR^{2n} \otimes \RR^{2n}$. Define an operator
\[
O = \sum_{r,s=1}^{2n} \epsilon_{r,s} \la r,s |\phi\ra  c_r c_s.
\]
Taking into account that $\epsilon_{r,s} c_r c_s$ is anti-hermitian for all $r,s$ one gets
\[
\la \phi |\rho |\phi\ra =\frac1{2n} \la \psi |O^\dag O|\psi \ra \ge 0.
\]
Thus $\rho$ is positive semidefinite. Next,
\[
\mathrm{Tr}_1 (\rho) = -\frac1{2n} \sum_{p,q,s=1}^{2n} \epsilon_{p,q} \epsilon_{p,s}
\mathrm{Re}(\la \psi|c_p c_q c_p c_s|\psi\ra) \, |q\ra \la s|.
\]
Since $\epsilon_{p,q} c_p c_q= - \epsilon_{p,q} c_q c_p$ for all $p,q$ we arrive at
\[
\mathrm{Tr}_1 (\rho) = \frac1{2n} \sum_{p,q,s=1}^{2n} \epsilon_{p,q} \epsilon_{p,s}
\mathrm{Re}(\la \psi| c_q c_s|\psi\ra) \, |q\ra \la s|
=\frac{(2n-1)}{2n} I \le I.
\]
Here we noted that $\mathrm{Re}(\la \psi| c_q c_s|\psi\ra)=0$
for $q\ne s$ and $\mathrm{Re}(\la \psi| c_q c_s|\psi\ra)=1$
for $q=s$. By symmetry, $\mathrm{Tr}_2 (\rho)  \le I$.
Finally, $\rho$ has support
on the subspace $\calL$ of real anti-symmetric matrices
since the tensor
\[
\epsilon_{p,q} \epsilon _{r,s} \mathrm{Re}(\la \psi | c_p c_q c_r c_s|\psi\ra)
\]
is antisymmetric under the swap of $p,q$ and the swap of $r,s$.
This proves that $\rho$ is a feasible solution of the 
SDP relaxation Eq.~(\ref{QPOCrelax}) and thus 
\[
\Lmax{h} =2n \cdot \mathrm{Tr}(\rho W) \le 2n \cdot \theta^*(\calL,W)
\le O(n\log{n}) \cdot \theta(\calL,W) = O(n\log{n})\cdot \Lgauss{h}.
\]
Here we used Lemma~\ref{lemma:So1} and Eqs.~(\ref{FermiQpOc},\ref{FermiF(X)}).
It remains to note that the approximation algorithm
of Lemma~\ref{lemma:So} outputs
a feasible solution $X\in \calL$, $\|X\|\le 1$ such that $F(X)\ge \theta^*(\calL,W)/O(\log{(n)})$. 
Such $X$ defines a covariance matrix of a mixed Gaussian state
$\rho$ such that 
\[
\mathrm{Tr}(\rho\, h) \ge \frac{\Lmax{h}}{O(n\log{n})}.
\]
By definition, $\rho$ is a probabilistic mixture of pure Gaussian states
$\phi$ such that the expected value of $\la \phi|h|\phi\ra$ 
is $\mathrm{Tr}(\rho\, h)$. Repeatedly sampling from the corresponding distribution and choosing the state with the highest energy, we can amplify the success probability above~$2/3$ following the argument after Eq.~\eqref{eq:expectationbound}.  This proves Theorem~\ref{thm:gauss_approx}.

\end{document}